\newtheorem{remark}{Remark}[section]
\newtheorem{assumption}{Assumption}[section]
\newtheorem{theorem}{Theorem}[section]
\newtheorem{prop}{Proposition}[section]
\newtheorem{lemma}{Lemma}[section]
\newtheorem{definition}[theorem]{Definition}
\newcommand{\bx}{{\bf x}(t)}
\newcommand{\cK}{{\mathcal K}}
\newcommand{\bz}{{\bf z}}
\newcommand{\T}{{\mathbb T}}
\newcommand{\R}{{\mathbb R  }}
\newcommand{\cL}{\mathcal L}
\newcommand{\eps}{\epsilon}
\newcommand{\p}{_{\gamma\beta}}
\newcommand{\op}{\mathcal{L}}
\newcommand{\fspace}{L^2_{\rho}}
\begin{document}

\setlength{\baselineskip}{10pt}

\title{ ASYMPTOTIC ANALYSIS FOR THE GENERALIZED LANGEVIN EQUATION}
\author{M. Ottobre and  G.A. Pavliotis \\
        Department of Mathematics\\
    Imperial College London \\
        London SW7 2AZ, UK
                    }

\maketitle

\begin{abstract}
Various qualitative properties of solutions to the generalized Langevin equation (GLE) in a periodic or a confining potential are studied in this paper. We consider a class of quasi-Markovian GLEs, similar to the model that was introduced in~\cite{EPR99}. Geometric ergodicity, a homogenization theorem (invariance principle), short time asymptotics and the white noise limit are studied. Our proofs are based on a careful analysis of a hypoelliptic operator which is the generator of an auxiliary Markov process. Systematic use of the recently developed theory of hypocoercivity~\cite{Vil04HPI} is made.
\end{abstract}


\section{Introduction}\label{sec:intro}

In this paper we study various qualitative properties of solutions to the generalized Langevin equation (GLE) in $\R^d$
\begin{equation}\label{e:gle}
\ddot{q} = -\nabla V(q) -\int_0^t \gamma(t-s) \dot{q}(s) \, ds + F(t),
\end{equation}
where $V(q)$ is a smooth potential (confining or periodic), $F(t)$ a mean zero stationary Gaussian process with autocorrelation function $\gamma(t)$, in accordance to the  fluctuation-dissipation theorem
\begin{equation}\label{e:fluct_dissip}
\langle F(t) \otimes F(s) \rangle = \beta^{-1} \gamma (t-s) I.
\end{equation}
Here $\beta$ stands for the inverse temperature and $I$ for the identity matrix. The GLE equation~\eqref{e:gle}, together with the fluctuation--dissipation theorem~\eqref{e:fluct_dissip} appears in various applications such as surface diffusion~\cite{Ferrando_al02} and polymer dynamics~\cite{snook07}. It also serves as one of the standard models of nonequilibirum statistical mechanics, describing the dynamics of a "small" Hamiltonian system (the distinguished particle) coupled to one or more heat baths which are modelled as linear wave equations with initial conditions which are distributed according to appropriate Gibbs measures~\cite{Rey-Bellet2006}. In this class of models the coupling between the distinguished particle and the heat bath is taken to be linear and is governed by a coupling function $\rho(x)$. The full Hamiltonian of the "particle+heat bath" model is
\begin{equation}\label{e:hamiltonian_full}
H(q,p, \phi, \pi) = H_{DP}(p,q) +  \mathcal{H}(\phi, \pi) + \lambda q \int \rho(x) \partial_q \phi(x) \, dx
\end{equation}
where $H_{DP}(q,p)$ denotes the Hamiltonian of the distinguished particle whose position and momentum are denoted by $q$ and $p$, respectively and $\mathcal{H}(\phi, \pi)$ is the Hamiltonian density of the wave equation where $\phi$ and $\pi$ are the canonically conjugate field variables. The linear coupling in~\eqref{e:hamiltonian_full} is motivated by the dipole approximation from classical electrodynamics. By integrating out the heat bath variables and using our assumptions on the initial conditions we obtain~\eqref{e:gle}, together with~\eqref{e:fluct_dissip}. The memory kernel $\gamma(t)$ in~\eqref{e:gle} is given by the coupling function through the formula
\begin{equation}\label{e:correlation}
\gamma(t) =  \int |\widehat{\rho}(k)|^2 e^{i k t} \, dk,
\end{equation}
where $\widehat{\rho}(k)$ denotes the Fourier transform of $\rho(x)$~\cite{JP97, Rey-Bellet2006}.

The GLE has also attracted attention in recent years in the context of mode reduction and coarse-graininig for high dimensional dynamical systems~\cite{GKS04}. One of the models that has been studied extensively within the framework of mode elimination is the Kac-Zwanzig model~\cite{FordKacMazur65, Zwanzig1973} and its variants~\cite{KSTT02, HK02, Kup03, ArielVandenEijnden09}. In this model, the heat bath is modelled as a finite dimensional system of $N$ harmonic oscillators with random frequencies and random initial conditions distributed according to a Gibbs distribution at inverse temperature $\beta$. The heat bath can be coupled either linearly or nonlinearly with the distinguished particle~\cite{KS03}. Just as with model~\eqref{e:hamiltonian_full}, we can integrate out the heat bath variables explicitly. Passing then to the thermodynamic limit $N \rightarrow +\infty$, we obtain the GLE~\eqref{e:gle}. The form of the memory kernel $\gamma (t)$ depends on the choice of the distribution of the spring constants of the harmonic oscillators in the heat bath~\cite{GKS04}. The Kac-Zwanzig model and its variants have proved to be very useful for testing various methodologies and techniques such as transition state theory~\cite{ArielVandenEijnden07,HanTalkBork90}.

The GLE~\eqref{e:gle} is a stochastic integrodifferential equation which is equivalent to the original infinite dimensional Hamiltonian system with random initial conditions. The infinite dimensionality of the original Hamiltonian dynamics with random initial conditions (or, equivalently, the non-Markovianity of the finite dimensional stochastic dynamics~\eqref{e:gle}) renders the analysis of this dynamical system very difficult. This problem was studied in detail by Jaksic and Pillet in a series of papers~\cite{JP97, JaksicPillet97, JaksicPillet98}. In these works, existence and uniqueness of solutions as well as the ergodic properties of~\eqref{e:gle} were studied in detail. In particular, it was shown that the process $\{ q, p = \dot{q} \}$ is mixing with respect to the measure
$$
\mu_{\beta} (dq dp) = \frac{1}{Z_{\beta}} e^{-\beta H_{DP}(q,p)} \, dq dp.
$$
To our knowledge, no information concerning the rate of convergence to equilibrium for the non-Markovian dynamics~\eqref{e:gle} is known for general classes of memory kernels. Ergodic theory for a quite general class of non-Markovian processes has been developed recently, see~\cite{Hairer2009} and the references therein.

A class of memory kernels for which more detailed information on the long time asymptotics of the GLE~\eqref{e:gle} can be obtained was considered by Eckmann, Hairer, Pillet and Rey-Bellet in a series of papers~\cite{ReyBelletThomas2002,EckmPillR-B00, EPR99, EH00}. Based on a generalization of Doob's theorem on stationary, Markovian, Gaussian processes~\cite{DymMcKean1976}, it was observed in these works that when the memory kernel $\gamma(t)$ has a rational spectral density, then the GLE~\eqref{e:gle} is equivalent to a {\bf finite dimensional} Markovian system. This system is obtained by adding a finite number of additional degrees of freedom which account for the memory in the system. These auxiliary variables satisfy linear stochastic differential equations. As an example we mention the case where $\widehat{\rho}(k)$ in~\eqref{e:correlation} can be written as
$$
|\widehat{\rho}(k)|^2 = \frac{1}{|p(k)|^2}
$$ 
where $p(k) = \sum_{m=1}^M c_m (- k)^m $ is a polynomial with real coefficients and roots in the upper half plane. Then the Gaussian process with spectral density $|\widehat{\rho}(k)|^2$ is the solution of the SDE
$$
p \left(-i \frac{d}{dt} x(t) \right) = \frac{d W}{d t},
$$
where $W(t)$ is a standard one dimensional Brownian motion.
A related finite dimensional approximation of the infinite dimensional dynamics~\eqref{e:gle} has been introduced by Mori~\cite{Mori65}, see also~\cite{Grigolini1982} and the references therein. Mori's technique is based on a continued fraction expansion of the Laplace transform of the memory function $\gamma(t)$. 

Motivated by the above, in this paper we will consider finite dimensional approximations of the GLE. The general form of the Markovian approximation of~\eqref{e:gle} can be written as~\cite{Kup03}~\footnote{To simplify the notation we consider~\eqref{e:gle} in one dimension. The results presented in this paper are valid in arbitrary finite dimensions. More details on the notation and on the multidimensional case can be found in Section~\ref{sec:notation} and Remark~\ref{rem:m,d>1}.}
\begin{subequations}\label{e:markovian}
\begin{eqnarray}
\dot{Q}_m(t) & = & P_m(t), \quad Q_m (0) = q(0), \\
\dot{P}_m(t) & = & - \nabla V (Q_m(t)) + \lambda^T z(t), \quad P_m (0) = p(0), \\
\dot{z}(t)   & = & - P_m(t) \lambda - A z(t) + C \dot{W}, \quad z(0) \sim \mathcal{N}(0, I ),
\end{eqnarray}
\end{subequations}
where $z: \R^+ \mapsto \R^m$ and $\lambda \in \R^m, \; A, \, C \in \R^{m \times m}$. The fluctuation-dissipation theorem, which takes the form $C C^T = \beta^{-1} (A + A^T)$, is assumed to be satisfied. 

In this paper we will consider~\eqref{e:markovian} with $\lambda = (\lambda_1, \lambda_2, \dots ,\lambda_m)$ and $A$ diagonal with $A_{ii} = \alpha_i >0$. This amounts to approximating the memory kernel by a sum of exponentials,
\begin{equation}\label{e:memory_approx}
\gamma_m(t) = \sum_{i=1}^m \lambda_i^2 e^{- \alpha_i |t|} .
\end{equation} 
It is expected that the results proved in this paper are also valid in the more general case~\eqref{e:markovian}. As remarked in~\cite{Kup03}, when $A$ is invertible, the more standard Mori approximation~\cite{Mori65} is equivalent to~\eqref{e:memory_approx} after an appropriate orthogonal transformation. 

For this particular choice of $\lambda$ and $A$ the SDEs~\eqref{e:markovian} become (we drop the subscripts $m$ for notational simplicity)
\begin{subequations}\label{syst_n_additional_proc}
\begin{eqnarray}
\dot{Q}(t) & = & P(t), \quad Q (0) = q(0), \\
\dot{P}(t) & = & - \nabla V(Q(t)) + \sum_{j=1}^m \lambda_j z(t), \quad P (0) = p(0), \\
\dot{z}_j(t)   & = & - \lambda_j P(t)  - \alpha_j z(t) + \sqrt{2 \alpha_j \beta^{-1}} \dot{W}_j,  \quad z_j(0) \sim \mathcal{N}(0, \beta^{-1} )
\end{eqnarray}
\end{subequations}
for $j=1, \dots , m$. The process $\{Q(t), \, P(t), \, z(t) \}$ is Markovian with generator $-\cL$ given by
\begin{eqnarray}
-\cL &=& p \cdot \nabla_q -\nabla_q V(q) \cdot \nabla_p +  \sum_{j=1}^m \lambda_j z(t) \cdot \nabla_p 
\nonumber \\ && 
+ \sum_{j=1}^m \lambda_j \left(- \lambda_j p \nabla_{z_j}  - \alpha_j z_j \cdot \nabla_{z_j} + \alpha_j \beta^{-1} \Delta_{z_j} \right).  \label{e:generator}
\end{eqnarray}
This is a degenerate second order elliptic differential operator of hypoelliptic type~\cite{Hormander1967}. Convergence to equilibrium for models of the form~\eqref{syst_n_additional_proc} has been studied using functional analytic techniques~\cite{EPR99, EH00}. Similar results have also been proved using Markov chain techniques~\cite{MatSt02, ReyBelletThomas2002}. In this paper we present an alternative proof of exponentially fast convergence to equilibrium in relative entropy using the recently developed theory of hypocoercivity~\cite{Vil04HPI}. 

Several other results are also proved in this paper. We obtain sharp estimates on the derivatives of the Markov semigroup generated by $-\cL$. We prove a homogenization theorem (invariance principle) for~\eqref{syst_n_additional_proc} in a periodic potential and we obtain estimates on the diffusion coefficient. Finally, we study the white noise limit of the GLE~\eqref{e:gle}, i.e. the limit as the noise $F(t)$ in~\eqref{e:gle} converges to a white noise process. We show that in this limit the solution of~\eqref{syst_n_additional_proc} converges strongly to the solution of the Langevin equation
\begin{equation}\label{e:langevin}
\ddot{q} = -V'(q) -\gamma \dot{q} +\sqrt{2 \gamma \beta^{-1}} \dot{W}
\end{equation}
and we obtain a formula for the friction coefficient $\gamma$ in terms of the coefficients $\{ \lambda_j, \, \alpha_j \}_{j=1}^m$.

The rest of the paper is organized as follows. In Section~\ref{sec:results} we state our main results and we introduce the notation that we will be using. In Section~\ref{sec:equilibrium} we prove exponentially fast convergence to equilibrium in relative entropy. In Section~\ref{sec:derivatives} we prove estimates on the derivatives of the Markov Semigroup generated by $-\cL$. In Section~\ref{sec:homogen} we prove the homogenization theorem. In Section~\ref{sec:markovian} we study the white noise limit. For the reader's convenience, background material on the theory of hypocoercivity is summarized in Appendix~\ref{app:hypoco}. Finally, the proof of geometric ergodicity of the process~\eqref{syst_n_additional_proc} using Markov chain techniques is presented in Appendix~\ref{sec:app_ergodicity}.


%
%
\section{Statement of Main Results}
\label{sec:results}
We will use the notation $X:= \T^{d} \times \R^{d} \times \R^{dm}$ and $Y:= \R^{d} \times \R^{d} \times \R^{dm}$. We will also denote the process $\{q(t), \, p(t), \, \bz(t) \}$ by $\bx$. When we study the dynamics~\eqref{syst_n_additional_proc} in $X$ the potential $V(q)$ is periodic, whereas when $\bx \in Y$ the potential will be taken to be confining. The precise assumptions on the potential are given in Assumption~\ref{assump:potential} below.

Our fist result concerns the ergodicity of the SDE~\eqref{syst_n_additional_proc} in  $X$ or in $Y$. To prove the ergodicity of the SDE in $Y$ we need to make the following assumptions on the potential.

\begin{assumption}\label{assump:potential}
\item (i)$\; V(q)\in C^2(\R^d)$ is a confining potential. 
\item(ii) $ \langle \nabla_qV,q \rangle \geq \sigma V(q)+\beta \|q\|^2 \;\mbox{for some suitable} \;\beta,\sigma>0$ where $\langle  \cdot, \cdot \rangle$ and $\|\cdot \|$ denote the Euclidean inner product and norm, respectively.
\item (iii) There exists a constant $c$ such that  $\| \nabla^2V \| \leq c $, where $\| \cdot \|$ denotes the Frobenious-Perron matrix norm.
\end{assumption}

The invariant measure $\mu_{\beta}(dq \, dp \, d \bz) = \rho(p,q, \bz) \, dp dq d \bz$ of the process~\eqref{syst_n_additional_proc}, whose density satisfies the stationary Fokker-Planck equation $\cL^* \rho =0$, is known:
\begin{equation}\label{e:gibbs_gle}
\rho (q,p, \bz) = \frac{1}{Z} e^{-\beta( \frac{1}{2} |p|^2 + V(q) + \frac{1}{2}\|\bz\|^2)},
\end{equation}
where $\mathcal{Z}$ is the normalization constant. This invariant measure is unique and the law of the process~\eqref{syst_n_additional_proc} converges exponentially fast to $\mu_{\beta}$. 

\begin{theorem}[Ergodicity] The process (\ref{syst_n_additional_proc}) with $\bx \in X$ and $V(q)\in C^3(\mathbb{T}^d)$ is geometrically ergodic. The same holds true when $\bx \in Y$, provided that the potential $V(q)\in C^3(\R^d)$ satisfies Assumption~\ref{assump:potential}.
\end{theorem}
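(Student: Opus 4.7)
The plan is to establish geometric ergodicity via the standard Harris/Meyn--Tweedie criterion: it suffices to prove (i) a minorization (small set) condition and (ii) a geometric Lyapunov drift inequality $\mathscr{G}\Phi \leq -c\, \Phi + K\mathbf{1}_{\mathcal{C}}$ for the generator $\mathscr{G}=-\cL$, some Lyapunov function $\Phi\geq 1$, constants $c,K>0$ and a petite set $\mathcal{C}$. Smoothness of transition densities plus exact controllability of the underlying control system will give the small set condition on any compact, while the drift inequality will provide the required return times to that compact.

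For the minorization step I would first verify H\"ormander's parabolic bracket condition on the generator~\eqref{e:generator}. The only diffusion vector fields are $\{\partial_{z_j}\}_{j=1}^m$ (one per component in the $\R^d$-valued noise). Taking brackets with the drift vector field $X_0 = p\cdot\nabla_q - \nabla_q V\cdot\nabla_p + \sum_j \lambda_j z_j\cdot\nabla_p - \sum_j \alpha_j z_j\cdot\nabla_{z_j} - \sum_j\lambda_j^2 p\cdot\nabla_{z_j}$, one checks that $[X_0,\partial_{z_j}]$ produces a component in $\partial_p$ (via the coupling term $\lambda_j z_j\cdot\nabla_p$), and that a further bracket $[X_0,\partial_p]$ produces a component in $\partial_q$ (via $p\cdot\nabla_q$). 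Hence the Lie algebra generated by $\{X_0,\partial_{z_1},\dots,\partial_{z_m}\}$ spans the full tangent space at every point. By H\"ormander's theorem the transition kernel $P_t(\bx,\cdot)$ admits a smooth density, so the process is strong Feller on $X$ (resp.\ $Y$). A short controllability argument for the associated deterministic system (in which one perturbs only the $z_j$-components) then shows that from any starting point all open sets are reachable, giving irreducibility. Strong Feller plus topological irreducibility imply that every compact set is petite, which yields the required minorization.

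For the drift condition I would take as Lyapunov function a modified Hamiltonian of the form
\begin{equation*}
\Phi(q,p,\bz) = 1 + H(q,p,\bz) + a\, p\cdot q + \sum_{j=1}^m\bigl(b_j\, z_j\cdot q + c_j\, z_j\cdot p\bigr),
\qquad H = V(q) + \tfrac12|p|^2 + \tfrac12\|\bz\|^2,
\end{equation*}
with small tuning constants $a,b_j,c_j$. Direct computation of $\mathscr{G}H$ gives only $-\sum_j\alpha_j(|z_j|^2 - \beta^{-1}d)$, i.e.\ genuine dissipation only in the $\bz$-variables. The cross term $ap\cdot q$ generates $a|p|^2 - a\,q\cdot\nabla_q V + (\text{lower order})$, and by Assumption~\ref{assump:potential}(ii) the sign of $-a\,q\cdot\nabla_q V$ provides confinement in $q$, producing $-a\sigma V - a\beta\|q\|^2$. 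The remaining cross terms $z_j\cdot q$ and $z_j\cdot p$ are introduced to cancel the indefinite cross products $\lambda_j z_j\cdot p$ arising from $\mathscr{G}(\tfrac12|p|^2)$ and $\mathscr{G}(p\cdot q)$. Choosing $a,b_j,c_j$ small enough (so that the quadratic form remaining on the right is positive definite up to a constant, using Assumption~\ref{assump:potential}(iii) to control $\nabla^2 V$-terms), the elementary inequality $2uv\leq \eta u^2 + \eta^{-1}v^2$ yields $\mathscr{G}\Phi \leq -c\,\Phi + K$ with $\Phi$ equivalent to $1+H$. In the periodic case $q\in\T^d$, so the $V$-confinement and the $\|q\|^2$-terms are not needed: one can set $a=b_j=0$ and a much simpler drift inequality already forces return of $(p,\bz)$ to compacts.

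The main technical obstacle is the construction and tuning of the cross terms in $\Phi$ for the confining case. The dissipation generated by the noise sits solely on the auxiliary variables $\bz$, and must be transported hypocoercively first into $p$ and then into $q$; the coupling $\lambda_j z_j\cdot\nabla_p - \lambda_j^2 p\cdot\nabla_{z_j}$ is skew in the purely kinetic variables, so the indefinite terms $\pm\lambda_j z_j\cdot p$ cancel only after the right weighted combination of $z_j\cdot p$ and $p\cdot q$ is added. Getting the coefficients $a,b_j,c_j$ in a range where the resulting quadratic form (coupling $\|q\|^2,|p|^2,\|\bz\|^2$ and the mixed terms) is positive definite, uniformly in the constants furnished by Assumption~\ref{assump:potential}, is the delicate point. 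Once the drift inequality is in hand, geometric ergodicity in the total-variation-weighted-by-$\Phi$ sense follows directly from Meyn--Tweedie.
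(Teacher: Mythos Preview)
Your proposal follows essentially the same route as the paper: H\"ormander's bracket condition combined with a controllability/Stroock--Varadhan argument for the minorization, and a modified-Hamiltonian Lyapunov function with bilinear cross terms for the geometric drift, concluding via Meyn--Tweedie. The paper's Lyapunov function in the confining case also carries an explicit $\tfrac{A}{2}\|q\|^2$ term and an $M\,(\nabla_q V,\,p)$ term that you do not include, but these merely ease the absorption of the mixed contribution $c_j\, z_j\cdot\nabla_q V$ arising from $\mathscr{G}(z_j\cdot p)$; your simpler ansatz closes as well once Assumption~\ref{assump:potential}(iii) is used to bound $|\nabla_q V|$ linearly in $|q|$ and the coefficients are tuned in the order $a\ll c_j\ll 1$.
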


The proof of this theorem, which is based on Markov chain-type arguments and which is similar to the proof presented in~\cite{ReyBelletThomas2002}, see also~\cite{MatSt02}, can be found in Appendix~\ref{sec:app_ergodicity}. 

We can prove exponentially fast convergence to equilibrium using tools from the theory of hypocoercivity~\cite{Vil04HPI}. We will use the notation $\cK :=\mbox{Ker}(\cL)$ and $H^1_{\rho}$ for the weighted Sobolev space  $H^1$ with respect to $\mu_\beta$ on either $X$ or $Y$.  

\begin{theorem}\label{thm:converg_equil}
Let $-\cL$ be the generator of the process $\bx \in X$, the solution of~\eqref{syst_n_additional_proc}. Then there exist constants $C, \, \lambda >0$ such that
$$
\|e^{-t \cL} \|_{H_{\rho}^1/\mathcal{K}\ \rightarrow H_{\rho}^1/\mathcal{K}} \leq C e^{- \lambda t}.
$$
The same holds true when $\bx \in Y,$ provided that the potential  satisfies Assumption~\ref{assump:potential}(i) and (iii).

\end{theorem}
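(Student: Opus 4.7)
The plan is to cast $\cL$ in the abstract $H^1$-hypocoercivity framework recalled in Appendix~\ref{app:hypoco} and verify its hypotheses. Decompose
\begin{equation*}
\cL = \sum_{j=1}^m A_j^* A_j + B, \qquad A_j := \sqrt{\alpha_j\beta^{-1}}\,\nabla_{z_j},
\end{equation*}
where $B$ collects the remaining first-order terms. Integration by parts against the Gaussian density~\eqref{e:gibbs_gle} gives $A_j^* = -\sqrt{\alpha_j\beta^{-1}}(\nabla_{z_j} - \beta z_j)$ and shows that $B$ is antisymmetric in $L^2_\rho$. Ergodicity implies that $\cK = \mathrm{Ker}(\cL)$ consists of the constants, so the task reduces to controlling $\|\nabla_q h\|^2 + \|\nabla_p h\|^2 + \|\nabla_{\bz} h\|^2$ by the Dirichlet form of $\cL$.

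The next step is to verify the H\"ormander-type bracket condition that upgrades partial dissipation in $\bz$ to full dissipation. A short calculation yields
\begin{equation*}
C_{1,j} := [A_j, B] \propto \lambda_j\,\nabla_p, \qquad C_{2,j} := [C_{1,j}, B] \propto \lambda_j\bigl(\nabla_q - \textstyle\sum_k \lambda_k \nabla_{z_k}\bigr),
\end{equation*}
so the family $\{A_j, C_{1,j}, C_{2,j}\}_{j=1}^m$ spans the full gradient $\{\nabla_{\bz},\nabla_p,\nabla_q\}$. The further bracket $[C_{2,j},B]$ produces $\nabla^2 V(q)\,\nabla_p$, which is a \emph{bounded} zero-order perturbation by compactness on $X$ and by Assumption~\ref{assump:potential}(iii) on $Y$; still higher brackets reproduce operators already controlled by the family above. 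With this algebraic structure in hand, the abstract theorem of Appendix~\ref{app:hypoco} furnishes a modified norm
\begin{equation*}
\|h\|_{\cH^1}^2 := \|h\|_{L^2_\rho}^2 + \sum_{j=1}^m\Bigl(a_j\|A_jh\|^2 + 2b_j\,\mathrm{Re}\langle A_jh, C_{1,j}h\rangle + c_j\|C_{1,j}h\|^2 + 2d_j\,\mathrm{Re}\langle C_{1,j}h, C_{2,j}h\rangle + e_j\|C_{2,j}h\|^2\Bigr),
\end{equation*}
whose small positive coefficients are tuned so that $\|\cdot\|_{\cH^1}$ is equivalent to $\|\cdot\|_{H^1_\rho/\cK}$ and $\tfrac{d}{dt}\|e^{-t\cL}h\|_{\cH^1}^2 \leq -\kappa\|e^{-t\cL}h\|_{\cH^1}^2$. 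A Poincar\'e inequality for $\mu_\beta$---immediate on $X$ and obtained on $Y$ by a Bakry--\'Emery/log-concave-perturbation argument using Assumption~\ref{assump:potential}(i) together with the Hessian bound of~(iii)---then closes the estimate and yields the exponential decay with explicit rate $\lambda$ and constant $C$.

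The main obstacle I expect is the tuning of the coefficients $(a_j, b_j, c_j, d_j, e_j)$: they must be small enough for the quadratic form to be positive (equivalence of norms), yet the indefinite cross terms generated by $\tfrac{d}{dt}\langle X_j h, X_k h\rangle$ along the flow must be dominated by the good dissipative contributions $\sum_j\|A_j X_k h\|^2$. This ultimately amounts to a finite-dimensional symmetric matrix inequality in the five parameters, which becomes particularly delicate on $Y$ since the $\nabla^2 V$ appearing in the higher brackets produces error terms that, without the uniform bound of Assumption~\ref{assump:potential}(iii), could destroy the sign of the matrix. Once this finite-dimensional positivity check is verified, the estimate in the theorem follows from Gr\"onwall's inequality and the equivalence of norms.
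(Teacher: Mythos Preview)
Your proposal is correct and follows essentially the same route as the paper: write $\cL = A^*A + B$, compute the commutator chain $C_1=[A,B]\propto\nabla_p$, $C_2=[C_1,B]\propto\nabla_q-\nabla_z$ to span the full gradient, observe that $[C_2,B]$ is relatively bounded by $C_1$ thanks to the Hessian bound of Assumption~\ref{assump:potential}(iii), and close via a Poincar\'e inequality for the product measure $\mu_\beta$. The only comment is that the ``tuning of coefficients'' you flag as the main obstacle is exactly what the abstract hypocoercivity theorem in Appendix~\ref{app:hypoco} packages for you: once the relative-boundedness hypotheses and the coercivity of $\sum_k C_k^*C_k$ are checked, the existence of the equivalent norm with the required dissipativity is guaranteed and need not be redone by hand.
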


Using the tools from~\cite{Vil04HPI} we can prove exponentially fast convergence to equilibrium in relative entropy. The relative entropy (or Kullback Information) between two probability measures $\mu$ and $\nu$ with smooth densities $f$ and $\rho$, respectively, is defined as
$$
H_{\rho}( f) = \int f \log \left(\frac{f}{\rho}  \right) \, d{\bf x}. 
$$  
We will measure the distance between the law of the process $\bx$ at time $t$ and the equilibrium distribution in relative entropy.

\begin{theorem}[Convergence to Equilibrium]\label{thm:converg_rel_entropy} 
Let $f_t$ be the law of the process $\bx$ at time $t$ and assume that $H_{\rho}(f_0) < +\infty$ and $V(q) \in C^2(\T^d)$. Then there exist constants $C, \, \alpha >0$ such that
$$
H_{\rho}(f_t) \leq C e^{-\alpha t}H_{\rho}(f_0).
$$
The same holds true when $\bx  \in Y$, assuming that $H_{\rho}(f_0) < +\infty$ and provided that the potential $V(q)$ satisfies Assumption~\ref{assump:potential}(i) and (iii). 

\end{theorem}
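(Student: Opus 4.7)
The plan is to apply the entropic hypocoercivity framework developed in~\cite{Vil04HPI}, which upgrades an $H^1_\rho$ hypocoercive decay estimate (as in Theorem~\ref{thm:converg_equil}) to exponential decay of the relative entropy, provided one can verify two additional ingredients: (a) the invariant measure $\mu_\beta$ satisfies a logarithmic Sobolev inequality (LSI), and (b) the Hessian of $V$ is uniformly bounded. Ingredient (b) is exactly Assumption~\ref{assump:potential}(iii) on $Y$, and is automatic on $\T^d$ for $V\in C^2$ by compactness.

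To establish LSI for $\mu_\beta$, observe that the density $\rho$ in~\eqref{e:gibbs_gle} is a tensor product of the $q$-marginal $Z_V^{-1}e^{-\beta V(q)}\,dq$ with centered Gaussian marginals in $p$ and in each $z_j$. Since LSI tensorizes, it suffices to establish LSI for the $q$-marginal. On $\T^d$, the uniform measure satisfies LSI (compact manifold of positive Ricci curvature), and the Holley--Stroock perturbation lemma applied to the bounded $V\in C^2(\T^d)$ gives LSI for $e^{-\beta V}$. On $\R^d$, LSI follows by combining Assumption~\ref{assump:potential}(i) (confinement) with (iii) (bounded Hessian), e.g.\ via a Lyapunov function argument using $e^{\beta V/2}$ together with a local spectral gap on any bounded domain where $\nabla^2 V$ is controlled, and then upgrading the Poincar\'e inequality to LSI through the bounded Hessian.

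Next, I would introduce Villani's modified Fisher information. Writing $h_t = f_t/\rho$, define
\[
I_K(h) = \int h\, \langle M \nabla \log h, \nabla \log h \rangle\, d\mu_\beta,
\]
where $\nabla \log h = (\nabla_q \log h, \nabla_p \log h, \nabla_{z_1}\log h, \ldots, \nabla_{z_m}\log h)$ and $M$ is a symmetric positive-definite matrix of constant coefficients chosen, as in~\cite{Vil04HPI}, so that $\tfrac{d}{dt} I_K(h_t) \leq -\lambda I_K(h_t)$ along the Fokker--Planck flow $\dd_t f_t = \cL^* f_t$. The generator~\eqref{e:generator} has H\"ormander form, with antisymmetric (in $L^2_\rho$) transport part and diffusive part $\sum_j \alpha_j \beta^{-1} \Delta_{z_j}$; the relevant commutators $[\cL, \nabla_q]$, $[\cL, \nabla_p]$, $[\cL,\nabla_{z_j}]$ are first-order in $\nabla$, with coefficients that are at worst linear in $\nabla^2 V$, so that a uniform Hessian bound keeps every error term controlled by $I_K$ itself. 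Combined with the LSI and the de~Bruijn-type entropy production identity $\tfrac{d}{dt} H_\rho(f_t) = - \sum_j \alpha_j\beta^{-1} \int |\nabla_{z_j} \log h_t|^2 h_t \, d\mu_\beta$, this yields
\[
\frac{d}{dt}\bigl(H_\rho(f_t) + \varepsilon I_K(h_t)\bigr) \leq -\alpha \bigl(H_\rho(f_t)+\varepsilon I_K(h_t)\bigr)
\]
for suitable $\varepsilon,\alpha>0$. Gr\"onwall closes the argument, with the initial value controlled after a standard regularization of $f_0$ so that $I_K(f_0/\rho) < +\infty$.

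The main obstacle is the coefficient-matching analysis for $M$: because the couplings $\lambda_j z_j \cdot \nabla_p$ and $-\lambda_j p\, \nabla_{z_j}$ in $\cL$ mix the three groups of variables $(q,p,\bz)$, the form $I_K$ must carry cross terms whose weights must be simultaneously tuned so that the indefinite contribution from the antisymmetric transport part is dominated by the nonnegative contribution of the $z$-diffusion together with the Hessian-dependent terms. This is a direct but lengthy computation, modelled on Villani's treatment of the kinetic Fokker--Planck equation in~\cite{Vil04HPI}, with additional bookkeeping for the $m$ auxiliary variables and the diagonal coupling matrix $A = \mathrm{diag}(\alpha_1,\ldots,\alpha_m)$; the rest of the proof is then essentially a repackaging of the $H^1$ argument of Theorem~\ref{thm:converg_equil} at the level of $\log h$ rather than $h$.
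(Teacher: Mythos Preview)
Your proposal is correct and follows essentially the same route as the paper: both invoke Villani's entropic hypocoercivity machinery (stated as Theorem~\ref{thm:long time entropic} in the appendix), reducing the proof to checking pointwise commutator bounds plus an LSI for $\mu_\beta$, the latter obtained via tensorization and Holley--Stroock. The only notable difference is that the paper applies the abstract theorem as a black box with the specific commutator choice $C_2=-\partial_q$ and remainder $R_2=-\partial_z$ (rather than $C_2=\partial_z-\partial_q$ as in the $H^1$ proof, precisely in order to satisfy hypothesis~4 of that theorem), whereas you describe the underlying modified-Fisher-information mechanism directly.
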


\begin{remark}
In view of the Kullback inequality
\begin{equation}\label{Kullback inequalty}
\frac{1}{2}\|f_t -\rho \|_{L^1}   \leq H_{\rho}(f_t),
\end{equation} 
Theorem~\ref{thm:converg_rel_entropy} implies that, for initial data with finite relative entropy, we have exponentially fast convergence to equilibrium in $L^1$.
\end{remark}

The proofs of Theorems~\ref{thm:converg_equil} and~\ref{thm:converg_rel_entropy} are presented in Section~\ref{sec:equilibrium}.

Estimates on the Markov semigroup and its derivatives associated to the Langevin equation can be proved using an appropriate Lyapunov function~\cite{Herau2007,HP07}. In this paper we use similar techniques to obtain estimates on the Markov semigroup and its derivatives for the generalized Langevin equation, equation~\eqref{syst_n_additional_proc}. We introduce $C_k, \, k=0,1,2$ with $C_0 = A, \, C_1 =[A,B]$ and $C_2 = [C_1,B]$. We will use the notation $L^2_{\rho} := L^2 ( \, \cdot \, ; \mu_{\beta} (d {\bf x}))$ where $\cdot$ is either $X$ or $Y$.

\begin{theorem}[Estimates on Derivatives of the Markov Semigroup]\label{thm:short_time_asymp}
Let $-\cL$ be the generator of the process $\bx \in X$, the solution of~\eqref{syst_n_additional_proc} with $V(q) \in C^2(\T^d)$. Then the Markov semigroup $e^{-t \cL}$ satisfies the bounds
\begin{eqnarray}\label{e:short_time}
\|C_k e^{-t\op} \|_{L^2_{\rho} \rightarrow L^2_{\rho}} \leq C   \frac{1}{t^{\frac{1 + 2 k}{2}}}, \quad k=0,1,2 \mbox{ and } t\in(0,1].
\end{eqnarray}
The same holds true when $\bx \in Y$, provided that the potential $V(q)$ satisfies Assumption~\ref{assump:potential}(i) and (iii). 
\end{theorem}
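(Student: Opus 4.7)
The plan is to follow the Hérau--Villani style of proof used for the kinetic Fokker--Planck equation in \cite{Herau2007,HP07}: construct a time-weighted quadratic Lyapunov functional involving $\|h\|^2, \|C_0 h\|^2, \|C_1 h\|^2, \|C_2 h\|^2$ together with appropriate cross terms, differentiate it along the semigroup, and choose the time weights so that the total derivative stays nonpositive. Throughout, $h(t) = e^{-t\cL}h_0$ where $h_0 \in L^2_{\rho}$, and $A, B$ are the ``dissipative'' and ``conservative'' parts of $\cL$ identified in the hypocoercivity framework of Appendix~\ref{app:hypoco}, so that the $C_k$'s are the iterated commutators $C_0=A$, $C_1=[A,B]$, $C_2=[C_1,B]$.

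First I would set
\begin{equation*}
\Phi(t,h) = \|h\|^2_{L^2_\rho} + a_1 t \|C_0 h\|^2 + a_2 t^2 \langle C_0 h, C_1 h\rangle + a_3 t^3\|C_1 h\|^2 + a_4 t^4 \langle C_1 h, C_2 h\rangle + a_5 t^5 \|C_2 h\|^2 ,
\end{equation*}
with positive constants $a_i$ to be chosen, and sufficiently small cross-term coefficients so that $\Phi$ is pointwise equivalent to the diagonal part. Using $\partial_t h = -\cL h$, I would compute $\tfrac{d}{dt}\Phi(t,h(t))$ and rewrite each term via the identity $[\cL, C_k] = C_{k+1} + \text{(controllable remainder)}$, which is the commutator structure underlying the hypoelliptic bracket condition. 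The negative contributions coming from $-\|C_1 h\|^2, -\|C_2 h\|^2, -\|C_3 h\|^2$ produced by the dissipation in the $z$ and $p$ variables are what one uses to absorb the positive cross terms generated by differentiating the weights $t^{2k+1}$. The precise powers $t, t^3, t^5$ are dictated by the scaling: a weight $t^{2k+1}$ in front of $\|C_k h\|^2$ is the one that balances the $\|C_{k+1}h\|^2$ loss against the $t^{2k}\|C_k h\|^2$ gain from differentiating $t^{2k+1}$.

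Choosing $a_1 \gg a_2 \gg a_3 \gg a_4 \gg a_5$ (the standard cascade), and using Cauchy--Schwarz on the cross terms, I would show that for $t\in(0,1]$ all the positive remainders can be absorbed into the negative dissipation terms, so that $\tfrac{d}{dt}\Phi(t,h(t)) \leq 0$. Combined with $\Phi(0,h_0) = \|h_0\|^2_{L^2_\rho}$, this gives
\begin{equation*}
a_{2k+1}\, t^{2k+1}\|C_k h(t)\|^2_{L^2_\rho} \leq \Phi(t,h(t)) \leq \|h_0\|^2_{L^2_\rho}, \qquad k=0,1,2,
\end{equation*}
which rearranges to the desired bound~\eqref{e:short_time}. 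The confining case $\bx\in Y$ under Assumption~\ref{assump:potential}(i),(iii) enters only through the boundedness of $\nabla^2 V$, needed to control the terms $\nabla^2 V \cdot \nabla_p$ that appear when one commutes $B$ with momentum derivatives.

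The main obstacle, in my view, is the bookkeeping of cross terms in step two: the GLE generator~\eqref{e:generator} has additional Ornstein--Uhlenbeck-type pieces in the $z_j$ variables, so commutators $[A,\cL]$ and $[[A,\cL],B]$ produce more terms than in the Langevin case, mixing $\nabla_p$ and $\nabla_{z_j}$ derivatives with coefficients involving $\lambda_j, \alpha_j$. Verifying that these extra terms still close up under the commutator hierarchy after two brackets (so that only $C_0, C_1, C_2$ need to be tracked, with $C_3$ appearing only as a controlled error) is the delicate point; once it is settled, the cascade argument with decreasing constants $a_i$ and the bound $t\leq 1$ proceed routinely.
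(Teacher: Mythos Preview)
Your proposal is correct and follows essentially the same approach as the paper: the paper introduces exactly the time-weighted Lyapunov functional
\[
F(t)=b_2\|u\|^2+a_0 t\|Au\|^2+b_0 t^2(Au,Cu)+a_1 t^3\|Cu\|^2+b_1 t^4(Cu,C_2 u)+a_2 t^5\|C_2 u\|^2
\]
with the cascade $b_2\gg a_0\gg b_0\gg a_1\gg b_1\gg a_2$, computes each time-derivative explicitly using the commutator relations~\eqref{[A,A] [A,C] [A,C2] [A,A*] [C,A*] [C2,A*] [C2,B]}, and absorbs all positive terms to get $\partial_t F<0$ on $(0,1]$, from which~\eqref{e:short_time} follows. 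Your worry about the extra Ornstein--Uhlenbeck pieces is unfounded in the normalized case $m=d=\alpha=\lambda=\beta=1$ treated in the paper: there the commutator table closes cleanly with $[C_2,B]=-(\partial_q^2 V+1)\partial_p$, so only $C_0,C_1,C_2$ need to be tracked and $\|\nabla^2 V\|\leq c$ is the only place the potential enters.
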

\begin{remark}
This result can also be obtained by applying Theorem \ref{thm:hypoeelipticity short time asymp}. Malliavin calculus-type arguments show that estimate (\ref{e:short_time}) is sharp.     
\end{remark}

When the potential $V(q)$ is periodic, the particle position, appropriately rescaled, converges weakly to a Brownian motion with a diffusion coefficient which can be calculated in terms of the solution of an appropriate Poisson equation. Results of this form have been known for a long time for the Smoluchowski (overdamped) equation~\cite[Ch. 13]{PavlSt08} as well as for the Langevin dynamics~\cite{papan_varadhan, HairPavl04}. In this paper we prove a similar result for the generalized Langevin equation. We will use the notation $\phi^e := \phi \cdot e, \, p^e :=p \cdot e$,  where $e$ denotes an arbitrary unit vector in $\R^d$.

\begin{theorem}[Homogenization]\label{thm:homogenization}
Let $\bx \in X$ be the solution of~\eqref{syst_n_additional_proc} with $V(q) \in C^{\infty}(\T^d)$ with stationary initial conditions. Then the rescaled process $q^e_\eps (t):=e \cdot\eps q(t/\eps^2)$ converges weakly on $C([0,T],\R)$ to a Brownian motion with diffusion coefficient $D$ with 
\begin{equation}\label{e:deff}
D^e := D e \cdot e = \beta^{-1} \sum_{j=1}^m \alpha_j \|\partial_{z_j} \phi^e \|^2,
\end{equation}
where $\phi^e \in \fspace$ is the unique, smooth, mean zero, periodic in $q$ solution of the Poisson equation
\begin{equation}\label{e:poisson_e}
\cL \phi^e =p^e
\end{equation}
on $X$. Furthermore, the following estimates hold
\begin{equation}\label{e:deff_estim}
0 < D^e \leq  \frac{4}{\beta}\sum_{i=1}^m\frac{\alpha_i}{\lambda_i^2}.
\end{equation}
\end{theorem}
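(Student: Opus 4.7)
The plan is the classical Kipnis--Varadhan Poisson-equation approach to invariance principles for additive functionals of Markov processes. The first step is to solve the cell problem $\cL \phi^e = p^e$. Since $\int p^e \, d\mu_{\beta} = 0$ by the $p \mapsto -p$ symmetry of $\rho$ and $\mbox{Ker}(\cL) = \cK$ consists of constants on $X$, the spectral gap from Theorem~\ref{thm:converg_equil} permits me to define
\begin{equation*}
\phi^e := \int_0^{\infty} e^{-t\cL} p^e \, dt,
\end{equation*}
with the integral converging in $H^1_\rho/\cK$. H\"ormander's condition for $-\cL$ (the brackets of the diffusion fields $\partial_{z_j}$ with the coupling and Hamiltonian terms successively generate every $\partial_{p}$ and $\partial_{q}$ direction) together with smoothness of $V$ give $\phi^e \in C^\infty(X)$, periodic in $q$, and automatically mean-zero since $e^{-t\cL}$ preserves $\int(\cdot)\,d\mu_\beta$.

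Next I apply It\^o's formula to $\phi^e(\bx(s))$ and use the Poisson equation to obtain
\begin{equation*}
\int_0^t p^e(\bx(s))\,ds \;=\; \phi^e(\bx(0)) - \phi^e(\bx(t)) + M_t, \qquad M_t \;:=\; \sum_{j=1}^m \sqrt{2\alpha_j\beta^{-1}} \int_0^t \partial_{z_j}\phi^e(\bx(s)) \cdot dW_j(s).
\end{equation*}
Combining with $q(t) - q(0) = \int_0^t p(s)\,ds$ and rescaling gives
\begin{equation*}
\eps q^e(t/\eps^2) \;=\; \eps q^e(0) + \eps\bigl[\phi^e(\bx(0)) - \phi^e(\bx(t/\eps^2))\bigr] + \eps M_{t/\eps^2}.
\end{equation*}
Under the stationary initial law the first three terms tend to zero in $L^2(\Pr)$ uniformly for $t \in [0,T]$, since $\phi^e \in \fspace$. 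The rescaled martingale I handle with the functional martingale CLT (Helland/Rebolledo): its predictable quadratic variation is
\begin{equation*}
\eps^2 \langle M \rangle_{t/\eps^2} \;=\; 2\beta^{-1} \sum_{j=1}^m \alpha_j\, \eps^2 \int_0^{t/\eps^2} \bigl|\partial_{z_j} \phi^e(\bx(s))\bigr|^2\,ds \;\longrightarrow\; 2 t D^e
\end{equation*}
almost surely as $\eps \to 0$ by Birkhoff's ergodic theorem applied to the stationary ergodic process $\bx$, and this identifies the constant as in~\eqref{e:deff}. Lindeberg's condition follows from continuity of the integrand together with the $L^4_\rho$ regularity of $\partial_{z_j}\phi^e$ supplied by Theorem~\ref{thm:short_time_asymp}.

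Finally I derive the estimates~\eqref{e:deff_estim}. Multiplying $\cL \phi^e = p^e$ by $\phi^e$ and integrating against $\mu_\beta$ yields, after the antisymmetric part of $\cL$ drops out, the identity $D^e = \int \phi^e p^e\,d\mu_\beta = \beta^{-1} \sum_j \alpha_j \|\partial_{z_j}\phi^e\|^2_{L^2_\rho} \ge 0$. Strict positivity is by contradiction: if $D^e = 0$ then $\partial_{z_j}\phi^e \equiv 0$ for every $j$, so $\phi^e$ depends only on $(q,p)$; substituting into $\cL\phi^e = p^e$ and matching the coefficients of the free variables $z_j$ forces $\nabla_p \phi^e \equiv 0$, after which the residual equation $-p \cdot \nabla_q \phi^e(q) = p^e$ demands $\nabla_q \phi^e \equiv -e$, which is incompatible with periodicity of $\phi^e$ on $\T^d$. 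For the upper bound I test the Poisson equation against the linear function $z_j^e$: direct calculation gives $\cL^* z_j^e = \alpha_j z_j^e - \lambda_j p^e$, so $0 = \int z_j^e p^e\,d\mu_\beta = \int z_j^e \cL\phi^e\,d\mu_\beta$ produces $\int z_j^e \phi^e\,d\mu_\beta = \lambda_j D^e/\alpha_j$. Gaussian integration by parts rewrites this as $\beta^{-1}\int \partial_{z_j^e}\phi^e\,d\mu_\beta = \lambda_j D^e/\alpha_j$; combining with the energy identity via Cauchy--Schwarz and appropriate weights in $j$ closes a quadratic inequality for $D^e$ and produces the stated bound.

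The main obstacle is the rigorous justification of the martingale CLT: verifying Lindeberg and identifying the limiting variance require global $L^p_\rho$ control on $\phi^e$ and its $z$-derivatives, which are obtained by combining the hypoelliptic regularization estimates of Theorem~\ref{thm:short_time_asymp} with the spectral gap of Theorem~\ref{thm:converg_equil}; producing the explicit constant in~\eqref{e:deff_estim} further requires care in the choice of the test function.
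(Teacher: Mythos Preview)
Your overall architecture---solve the Poisson equation, decompose via It\^o, apply the martingale CLT, prove positivity by contradiction---is exactly the paper's. Two points deserve comment.

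First, for the existence of $\phi^e$ you invert $\cL$ by integrating the semigroup against the spectral gap, whereas the paper proves a separate Fredholm-alternative result (Proposition~\ref{prop:poisson}) via compactness of the resolvent; both are standard and either suffices here.

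Second, your upper bound argument is genuinely different and in fact sharper. Completing your sketch: from $\cL^* z_j = \alpha_j z_j - \lambda_j p$ and $\int z_j\, p\, d\mu_\beta = 0$ one gets $\alpha_j \int z_j \phi^e\,d\mu_\beta = \lambda_j D^e$; Gaussian integration by parts turns the left side into $\alpha_j\beta^{-1}\int \partial_{z_j}\phi^e\,d\mu_\beta$, and Cauchy--Schwarz then gives $\|\partial_{z_j}\phi^e\| \ge \beta\lambda_j D^e/\alpha_j$. Feeding this into the energy identity $D^e = \beta^{-1}\sum_j \alpha_j\|\partial_{z_j}\phi^e\|^2$ closes the quadratic inequality and yields
\[
D^e \;\le\; \frac{1}{\beta}\Bigl(\sum_{j=1}^m \lambda_j^2/\alpha_j\Bigr)^{-1},
\]
which is strictly stronger than~\eqref{e:deff_estim}. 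The paper instead writes $\phi = g_i + \lambda_i^{-1} z_i$, deduces $\cL g_i = -\alpha_i \lambda_i^{-1} z_i$, bounds $\|\partial_{z_i}g_i\| \le 1/\lambda_i$ from the associated energy identity, and then estimates $D^e$ termwise; this is more elementary but loses a factor.

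One small correction: you cite Theorem~\ref{thm:short_time_asymp} for $L^4_\rho$ control of $\partial_{z_j}\phi^e$ in order to verify Lindeberg, but that theorem gives only $L^2_\rho\to L^2_\rho$ bounds. It does not matter: the martingale $M^\eps$ has continuous paths, so the jump condition in the martingale CLT is vacuous and only convergence of the quadratic variation is needed.
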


Let $q(t)$ be the solution of the Langevin equation~\eqref{e:langevin} and let $q^{\gamma}(t):=q(\gamma t)$. It is well known that this rescaled process converges
 in the overdamped limit $\gamma \rightarrow +\infty$ to the solution of the Smoluchowski equation~\cite[Ch. 10]{nelson}
\begin{equation}\label{e:smoluchowski}
\dot{q} = -\nabla V(q) +\sqrt{2 \beta^{-1} } \dot{W}.
\end{equation}
Similar results have also been proved in infinite dimensions~\cite{CerFreid06a}. In this paper we prove a similar result for the convergence of solutions to the GLE to the Langevin equation in the strong topology and obtain a formula for the friction coefficient that appears in the Langevin equation.

Consider~\eqref{e:gle} with the rescaled noise process
\begin{equation}\label{e:rescaled_noise}
F^{\eps}(t) :=\frac{1}{\sqrt{\eps}} F(t/ \eps),
\end{equation}
which is a mean zero stationary Gaussian process with autocorrelation function
\begin{equation}\label{e:rescaled_memory}
\gamma^{\eps}(t) = \frac{1}{\eps} \gamma (t/ \eps).
\end{equation}
For the memory kernel~\eqref{e:memory_approx}, $\gamma^{\eps}(t)$ becomes
\begin{equation}\label{e:rescaled_memory_1}
\gamma^{\eps}(t) = \sum_{j=1}^m \frac{\lambda_j^2}{\eps} e^{- \frac{\alpha_j}{\eps}|t|}.
\end{equation}
Consequently, the rescaled noise process~\eqref{e:rescaled_noise} is obtained by rescaling the coefficients in~\eqref{syst_n_additional_proc} according to  $\lambda_j \rightarrow \frac{\lambda_j}{\sqrt{\eps}}$, $\alpha_j \rightarrow \frac{\alpha_j}{\eps}$. Under this rescaling the SDEs become 
\begin{equation}\label{white noise system}
\left\{
\begin{array}{ccl}
\dot{q}(t)&=&p(t),\\
\\
\dot{p}(t)&=&-\partial_qV(q)+
\sum_{i=1}^m\frac{\lambda_i}{\sqrt{\epsilon}}z_i(t),\\
\\
\dot{z}_i(t)&=&-\frac{\lambda_i}{\sqrt{\epsilon}}p_t-\frac{\alpha_i}{\epsilon}z_i(t)+
\sqrt{\frac{2\alpha_i \beta^{-1}}{\epsilon}}\dot{W}_i, \quad i=1,...,m.
\end{array}
\right.
\end{equation} 

\begin{theorem}[The White Noise Limit]\label{thm:markovian limit}
Let $\{q(t), p(t), \bz(t) \} \in X$ be the solution of (\ref{white noise system}) with $V(q) \in C^{1}(\T^{d})$ and initial conditions having finite moments of any order. Then $\{q (t), p (t) \}$ converge strongly to the solution of the Langevin equation
\begin{equation}\label{limit system for white noise scaling}
\left\{
\begin{array}{ccl}
\dot{Q}(t)&=&P(t),\\
\dot{P}(t)&=&-\nabla_qV(Q_t)-
\sum_{i=1}^m\left(\frac{\lambda_i^2}{\alpha_i} P(t)-\sqrt{\frac{2\lambda_i^2}{\alpha_i \beta^{-1}}}
\dot{W}_i\right).\\
\end{array}
\right.
\end{equation}
Consequently, the process $ \{ q,p \}$ converges weakly to the solution of the Langevin equation
\begin{equation}\label{e:sde_white_noise}
\left\{
\begin{array}{ccl}
\dot{Q}(t)&=&P(t),\\
\dot{P}(t)&=&-\nabla_q V(Q_t)-\gamma P+\sqrt{2\gamma \beta^{-1}}\dot{W},\\
\end{array}
\right.
\end{equation} 
where the friction coefficient $\gamma$ is given by the formula 
$$
\gamma=\sum_{j=1}^m\frac{\lambda_i^2}{\alpha_i}.
$$
\end{theorem}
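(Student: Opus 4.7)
The plan is to pass to the limit in the integrated form of the equations, exploiting the fact that for each $i$ the auxiliary variable $z_i$ is a scalar Ornstein--Uhlenbeck process forced by $p$ on the fast time scale $\epsilon/\alpha_i$. Applying Duhamel's formula to the $z_i$-equation gives
\begin{equation*}
z_i(t)=e^{-\alpha_i t/\epsilon}z_i(0)-\frac{\lambda_i}{\sqrt{\epsilon}}\int_0^t e^{-\alpha_i(t-s)/\epsilon}p(s)\,ds+\sqrt{\frac{2\alpha_i\beta^{-1}}{\epsilon}}\int_0^t e^{-\alpha_i(t-s)/\epsilon}\,dW_i(s).
\end{equation*}
Multiplying by $\lambda_i/\sqrt{\epsilon}$, summing in $i$, and integrating in $t$, the contribution of $\sum_i(\lambda_i/\sqrt{\epsilon})z_i$ to $p(t)-p(0)$ decomposes into a transient term coming from $z_i(0)$ (which vanishes in $L^2$ as $\epsilon\to0$ since $\int_0^t\epsilon^{-1/2}e^{-\alpha_i s/\epsilon}ds=O(\sqrt{\epsilon})$), a memory drift, and a stochastic convolution.

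For the drift, Fubini gives
\begin{equation*}
-\sum_{i=1}^m\frac{\lambda_i^2}{\epsilon}\int_0^t\!\!\int_0^s e^{-\alpha_i(s-u)/\epsilon}p(u)\,du\,ds=-\sum_{i=1}^m\frac{\lambda_i^2}{\alpha_i}\int_0^t\bigl(1-e^{-\alpha_i(t-u)/\epsilon}\bigr)p(u)\,du,
\end{equation*}
which equals $-\gamma\int_0^t p(u)\,du$ plus a remainder of order $\sqrt{\epsilon}$ in $L^2(\Omega)$ (after Cauchy--Schwarz and uniform $L^2$ bounds on $p$). The stochastic Fubini theorem rewrites the stochastic convolution as
\begin{equation*}
\sum_{i=1}^m\frac{\lambda_i\sqrt{2\alpha_i\beta^{-1}}}{\alpha_i}\int_0^t\bigl(1-e^{-\alpha_i(t-u)/\epsilon}\bigr)dW_i(u),
\end{equation*}
whose distance in $L^2(\Omega)$ to $\sum_i\sqrt{2\lambda_i^2/(\alpha_i\beta)}\,W_i(t)$ is $O(\sqrt{\epsilon})$ by the It\^o isometry applied to $e^{-\alpha_i(t-u)/\epsilon}$, uniformly on $[0,T]$.

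The third step is to couple the rescaled system~\eqref{white noise system} to the limit system~\eqref{limit system for white noise scaling} by driving both with the same initial conditions $(q(0),p(0))$ and the same Brownian motions $W_i$. Setting $(\delta q,\delta p)=(q-Q,p-P)$, the two previous steps yield an integral equation of the form
\begin{equation*}
\delta p(t)=-\int_0^t\bigl(\nabla V(q(s))-\nabla V(Q(s))\bigr)ds-\gamma\int_0^t\delta p(s)\,ds+R_\epsilon(t),\qquad \delta q(t)=\int_0^t\delta p(s)\,ds,
\end{equation*}
with $R_\epsilon\to0$ in $L^2(\Omega;C([0,T]))$. Uniform-in-$\epsilon$ moment bounds on $p$ and $z$ needed at each stage follow from the Lyapunov function $\tfrac12|p|^2+V(q)+\tfrac12\|\bz\|^2$, whose dissipation rate under the generator of~\eqref{white noise system} is independent of $\epsilon$. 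Since $V\in C^1(\T^d)$ is periodic, $\nabla V$ is uniformly Lipschitz, and a Gronwall argument produces the strong convergence $(q,p)\to(Q,P)$ on every compact time interval, which is the first conclusion of the theorem. The weak convergence to~\eqref{e:sde_white_noise} is then immediate: the Gaussian martingale $\sum_i\sqrt{2\lambda_i^2\beta^{-1}/\alpha_i}\,W_i(t)$ has quadratic variation $2\gamma\beta^{-1}t$ with $\gamma=\sum_i\lambda_i^2/\alpha_i$, so it equals $\sqrt{2\gamma\beta^{-1}}\,W(t)$ in law for some standard Brownian motion $W$.

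The main difficulty is that the individual forcings $(\lambda_i/\sqrt{\epsilon})z_i$ on $p$ are each of order $\epsilon^{-1/2}$ and do not converge pointwise, so the analysis proceeds entirely in integrated form and relies on two cancellations: the time integration of the fast exponential kernel, which converts $\epsilon^{-1}e^{-\alpha_i t/\epsilon}$ into an approximate identity of mass $1/\alpha_i$, and the It\^o isometry, which turns the highly oscillatory stochastic convolution into a manageable Brownian martingale. Once these two cancellations are accounted for, controlling $R_\epsilon$ uniformly in $t\in[0,T]$ via moment bounds and Gronwall is routine.
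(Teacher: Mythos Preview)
Your Duhamel/Fubini decomposition is essentially a repackaging of the paper's argument. The paper does not integrate the $z_i$-equation explicitly; instead it solves the $z_i$-equation \emph{algebraically} for $\frac{\lambda_i}{\sqrt{\epsilon}}z_i$, obtaining
\[
\frac{\lambda_i}{\sqrt{\epsilon}}z_i=-\theta_i p+\sqrt{\tfrac{2\lambda_i^2\beta^{-1}}{\alpha_i}}\,\dot W_i-\sqrt{\epsilon}\,\frac{\lambda_i}{\alpha_i}\dot z_i,\qquad \theta_i=\lambda_i^2/\alpha_i,
\]
and then integrates once in time. This produces the \emph{same} remainder as yours, namely $R_\epsilon(t)=-\sqrt{\epsilon}\sum_i(\lambda_i/\alpha_i)\bigl(z_i(t)-z_i(0)\bigr)$; your three Duhamel pieces (transient, drift kernel, stochastic convolution) simply reassemble into $-\sqrt{\epsilon}(\lambda_i/\alpha_i)z_i(t)$. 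So the decomposition is the same, only the bookkeeping differs.

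Where the two arguments diverge is in the a priori control of the remainder, and here your justification has a gap. You claim uniform-in-$\epsilon$ moment bounds on $p$ and $\bz$ from the Hamiltonian $H=\tfrac12|p|^2+V(q)+\tfrac12\|\bz\|^2$, asserting that its dissipation rate is $\epsilon$-independent. It is not: a direct computation gives $L^\epsilon H=-\epsilon^{-1}\sum_i\alpha_i(z_i^2-\beta^{-1})$, which is of order $\epsilon^{-1}$ and dissipates only $z$, so one obtains $E[H(t)]\le E[H(0)]+Ct/\epsilon$. This does not give the uniform $L^2$ bound on $p$ that your Cauchy--Schwarz step needs, nor does the It\^o isometry alone control the $\sup_t$ of the stochastic remainder (it is an OU process in $t$, not a martingale). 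The paper avoids this entirely: it never claims uniform bounds, but instead proves (its Proposition~6.1) via It\^o's formula and Gronwall that $\sum_iE\sup_{[0,T]}|z_i|^r\le C\epsilon^{-1}e^{CT\epsilon^{r/2-1}}$, and then takes $r>2$ so that $\epsilon^{r/2}$ beats the $\epsilon^{-1}$ growth. Your scheme can be repaired along the same lines (or by bootstrapping a uniform bound on $p$ from the integrated equation itself), but the energy Lyapunov function alone does not deliver it.

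One minor point: $V\in C^1(\T^d)$ does not make $\nabla V$ Lipschitz, only bounded and uniformly continuous, so the Gronwall closure as written needs $V\in C^{1,1}$. The paper's own Gronwall step has the same implicit requirement.
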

%
%
\subsection{Notation}\label{sec:notation}
For $\bx=(q,p,{\bf z})\in Y:=\mathbb{R}^d\times \mathbb{R}^d\times \mathbb{R}^{dm}\, $ or $\bx\in X:=\mathbb{\T}^d\times \mathbb{R}^d\times \mathbb{R}^{dm}\, $ consider the operator $\op$:
\begin{eqnarray}
-\op & = &  p \nabla_q - \nabla_qV(q) \nabla_p + \left(\sum_{j=1}^m \lambda_j z_j \right) \nabla_{p} 
\nonumber\\  &&
+  \sum_{j=1}^m \left(- \alpha_j z_j \nabla_{z_j} - \lambda_j p \nabla_{z_j} + \beta^{-1} \alpha_j \nabla^2_{z_j} \right), \label{operator}
\end{eqnarray}
with kernel $\mathcal{K}:=Ker\op$. The density of the invariant measure of the process $\bx$ is
\begin{equation}\label{equilibrium measure}
\rho(p,q,z)=\frac{1}{\mathcal{Z}}e^{-\beta(V(q)+\frac{1}{2}\mid p\mid^2+\frac{1}{2}\mid z\mid^2)}, \quad \mathcal{Z}=\int e^{-\beta(V(q)+\frac{1}{2}\mid p\mid^2+\frac{1}{2}\mid z\mid^2)} dpdqdz,
\end{equation}
where $\mid \cdot\mid$ denotes either the Euclidean or the matrix norm. In (\ref{operator}), $\nabla$ is the gradient (or the derivative when $d=1$) and $\Delta$ the Laplacian. $\nabla^2$ denotes the Hessian and if $O$ is an operator then $O^*$ is its adjoint in $\fspace := L^2 (\, \cdot \, ; \mu_{\beta} (d {\bf x}))$. Furthermore, we will use the notation $L=-\op$ so that $L$ is the actual generator of the process.\\ 
Define
\begin{equation}\label{def of B}
B=-p\nabla_q + \nabla_q V\nabla_p- \sum_{j=1}^m \lambda_j \left( z_j \nabla_{p}  -p \nabla_{z_j}\right).
\end{equation}
We easily check that $B^*=-B.$
If  $m=1$ then $A_i=-\partial_{z_i}$ (derivative with respect to the $i-th$ component of $z$) so that $ A^*_i=-z_i+\partial_{z_i}$
and we can write
\begin{equation}\label{L=A*A+B}
\op=B+\sum_{i=1}^{d}A_i^*A_i=B+A^*A,
\end{equation}
where $A$ is intended to be the row vector of operators $(A_1,...,A_d)$ ( the same for $A^*$). More precisely, if $m=1$ then: 
$A:\fspace\longrightarrow\fspace \otimes \mathbb{R}^d$,  
 $B:\fspace\longrightarrow \fspace$, 
$[A^*,A]:\fspace \longrightarrow \fspace$, being $[A^*,A]:=\sum_{j=1}^d [A^*_j,A_j]$; on the other hand
$[A,A^*]:\fspace\longrightarrow \fspace\otimes \mathbb{R}^d\otimes \mathbb{R}^d$ is a matrix whose $ij$-th component is given by $ [A,A^*]_{ij}:=[A_i,A_j^*]$; in an analogous way 
$[A,A]: \fspace\longrightarrow \fspace\otimes \mathbb{R}^d\otimes \mathbb{R}^d$ is a matrix with $[A,A]_{ij}:=[A_i,A_j]$; finally $C:=[A,B]$,
$C:\fspace\longrightarrow \fspace\otimes \mathbb{R}^d $ is a vector of operators,    $C_i=[A_i,B], \,i=1...d$,  and the same holds for $C_2:=[C,B]$, $C_2:\fspace\longrightarrow \fspace\otimes \mathbb{R}^d$.
  
If $m>1$ then  (\ref{L=A*A+B}) becomes
\begin{equation}\label{l=b+A*A general}
\op=B+\sum_{i=1}^{m}\sum_{j=1}^dA_{ij}^*A_{ij}
\end{equation}
with $A_{ij}=-\partial_{z_{i_j}}$ i.e. derivative with respect to the $j$-th component of $z_i.$
We will use the notation
\begin{equation}\label{L=B+A^*A syntetic}
\op=B+A^*A, 
\end{equation}
meaning either (\ref{L=A*A+B}) or (\ref{l=b+A*A general}).\\
As for the norms, unless otherwise specified, 
$\| \cdot \| $ indicates the norm  of $\fspace$, $\|\cdot \|_1^2=
 \|A\cdot\|^2+\|C\cdot\|^2+\|C_2\cdot\|^2$ is a sort of homogeneous $H^1(Y ; \mu_{\beta}(d {\bf x}))$ norm and 
$\|\cdot\|_{H^1}^2=\|\cdot\|^2+\|A\cdot\|^2+\|C\cdot\|^2+\|C_2\cdot\|^2$
is the usual inhomogeneous one. The inner products in these Hilbert spaces  are denoted by $(\cdot ,\cdot )$, $(\cdot ,\cdot )_1$ and $( \cdot,\cdot )_{H^1}$, respectively. 
\begin{remark}\label{rem:maximal accretivity}
It is a general result that if $\op$ is \textit{accretive} and  $\bar{\op}$ is maximally accretive then $\bar{\op}$ is the generator of a contraction semigroup, so $e^{-\bar{\op} t}$ is well defined~\cite[Ch. 5]{HelNi05}.\\
$\op$ is accretive since, as already noticed, $( \op f,f)=\|Af \|^2\geq 0 \; \forall f$ in the domain of $\op$.
To prove that  $\bar{\op}$ is maximally accretive we show that the operator $ 
\op^{\ast}+\nu I$ is injective for all $\nu >0$. Since the operator is linear, we just need to show that $Ker (\op^{\ast}+\nu I)= \{0 \}$ for all $\nu >0$. 
So suppose $f\in Ker \big( \op^{\ast}+\nu I \big)$, then
$$
\op^{\ast}f+\nu f=-Bf+A^{\ast}Af+\nu f=0.
$$ 
Consequently
$$
(-Bf,f)+\|Af\|^2+\nu\|f\|^2=0.
$$
Therefore
$$
0\leq\|Af\|^2=-\nu\|f\|^2\Rightarrow f=0.
$$
Notice that when $\nu=1$, from the above we conclude that $A^{\ast}A$ is self-adjoint and not only symmetric~\cite[Thm VIII.3]{reed-simon}. From now on by $\op$ we mean its closure.
\end{remark}

\section{Convergence to Equilibrium}
\label{sec:equilibrium}
\subsection{Hypocoercivity}

Background material on hypocoercivity is presented in Appendix~\ref{app:hypoco}.

\begin{definition}[Hypocoercivity]
With the same notation as in Definition \ref{def:coercivity} and assuming that the operator $\mathcal{T}$ generates a continuous semigroup, such an operator is said to be $\lambda$-hypocoercive on $\tilde{\mathcal{H}}$ if there exists a constant $\kappa>0$ such that
$$
\parallel e^{-\mathcal{T}t}h\parallel_{\tilde{\mathcal{H}}}\leq \kappa e^{-\lambda t}\parallel h\parallel_{\tilde{\mathcal{H}}}
\quad \forall h\in \tilde{\mathcal{H}} \mbox{ and }t\geq 0.
$$   
\end{definition}
We say that an unbounded linear operator $S$ on $\mathcal{H}$ is \textit{relatively bounded} with respect to the (linear unbounded) operators $T_1,...,T_n$ if $\mathcal{D}(S)\subset (\cap\mathcal{D}(T_j)$) and $\exists$ a constant $\alpha>0$ s.t.
$$
\forall h\in \mathcal{D}(S),\qquad \|Sh\|\leq \alpha (\|T_1h\|+...+\|T_nh\|).
$$
The basic idea employed in the the theorems that we are going to use is to appropriately construct a scalar product on $H^1_{\rho}$ by adding lower order terms and then use the fact that hypocoercivity is invariant with respect to a change of equivalent norms, whereas coercivity does not enjoy such invariance.
Finally, notice that 
$\mathcal{S},$ the class of Schwartz functions is dense in $\fspace$, hence it is dense in $D(A)\cap D(B)$. This guarantees that all the operations performed on these (unbounded) operators are well defined.\\
Set $m=1=d$, $\alpha=\lambda=\beta =1$. The first two commutators are
\begin{equation}
C_1 = C = [A,B] = \partial_p \quad \mbox{and} \quad C_2 = [C,B] = \partial_z -\partial_q. \end{equation}
Hence the operator is \textit{hypoelliptic}~\cite{Hormander1967}. Furthermore,
\begin{align}\label{[A,A] [A,C] [A,C2] [A,A*] [C,A*] [C2,A*] [C2,B]}
&[A,A]=0\quad [A,C]=0\quad [A,C_2]=0,   \nonumber\\
&[A,A^*]=Id\qquad [C,A^*]=0\qquad [C_2,A^*]=-Id, \nonumber\\
&[C_2,B]=-\partial^2V\partial_p-\partial_p, \qquad\nonumber\\
&[C,C^*]=Id\qquad[C_2^*,C_2]=-Id-\partial_q^2V,
\end{align}
where  $Id$ is the identity operator.
\subsection{Proof of Theorem \ref{thm:converg_equil} }
\begin{proof}
We will use Theorem \ref{thm:hypocoercivity} . To this end, set
\begin{equation*}
P=A^*A+C^*C+C_2^*C_2
\end{equation*}
and notice that $Ker(P)=\mathcal{K}=:Ker\op$ contains only constants; in fact 
$$
Ker(P)=Ker(A^*A)\cap
Ker(C^*C)\cap Ker(C_2^*C_2)=Ker(A)\cap Ker(C)\cap Ker(C_2).
$$ 
To show that  $\mathcal{K}=Ker(A^*A)\cap Ker(C^*C)\cap Ker(C_2^*C_2)$: the inclusion $\supseteq$ is obvious. For the other inclusion: if $h\in \mathcal{K}$ then $\| Ah\|^2+\|Ch\|^2+\| C_2 h\|^2=0 \Rightarrow Ah=Ch=C_2h=0$.\\
The above mentioned  theorem requires two sets of hypotheses to be fulfilled. Hypothesis 1,2 and 3. in Theorem \ref{thm:hypocoercivity} are quantitative assumptions, which are  satisfied in our case with $N=2$, $C_0=A$, $C_1=C$, $R_1=R_2=0$, $R_3=[C_2,B]$ (this is  to have $C_3=0$) and thanks to Assumption \ref{assump:potential}(iii).
Hypothesis 4. requires, in our case, for the operator $P$ to be $\kappa$-\textit{coercive} on 
$\mathcal{K}^{\perp}\cong \fspace / \mathcal{K}$.
The coercivity of $P$ is equivalent to 
$$
\|A h\|^2+\|C h\|^2+\|C_2 h\|^2\geq\kappa\|h\|^2,
$$ 
that is, more explicitly,
$$
\|\nabla_z h\|^2+ \|\nabla_p h\|^2+ 
\|\left(\nabla_z-\nabla_q\right)h\|^2 \geq \kappa \|h\|^2.
$$
Using the fact that $\|a-b\|^2\geq \frac{\|a\|^2}{3}-\frac{\|b\|^2}{2}$, we have
$$
\|\nabla_z h\|^2+ \|\nabla_p h\|^2+ 
\|\left(\nabla_z -\nabla_q\right)h\|^2 \geq\frac{1}{3} \left(  \|\nabla_z h\|^2+ \|\nabla_p h\|^2+\|\nabla_q h\|^2\right)
$$
so we just need
$$
\|\nabla_z h\|^2+ \|\nabla_p h\|^2+\|\nabla_q h\|^2 \geq\kappa \|h\|^2
$$
to hold true. Since  $\mu_{\beta}$ is a product measure, we only need to verify that
$$
\int \vert \nabla_q h\vert^2 e^{-V(q)} dq\geq \mu\int (h-\langle h\rangle)^2 e^{-V(q)} dq
$$   
holds true for some constant $\mu$, where the notation $\langle h\rangle :=\int he^{-V(q)dq}$ has been used. It is a standard result that if $V(q)\in C^2(\R^d)$ is such that $e^{-V(q)}/\mathcal{Z}$ is a probability density  and 
\begin{equation}\label{potential_Poinc}
\frac{\mid \nabla V(q)\mid^2}{2}-\Delta V(q)
\stackrel{\mid q\mid\rightarrow\infty}{\longrightarrow}+\infty
\end{equation}
then $e^{-V(q)}/\mathcal{Z}$ satisfies a Poincar\'e inequality (see, e.g.,~\cite[Thm. A.1]{Vil04HPI} ). From Assumption \ref{assump:potential}(iii), Condition \eqref{potential_Poinc} is satisfied.
We can conclude that there exist a scalar product $((\cdot,\cdot))$ inducing a norm equivalent to the inhomogeneous norm of $H^1$ and a constant $\hat{\lambda}>0$ such that $\op$ is coercive in this norm:
$$
\forall h\in \fspace/ \mathcal{K}, \qquad((h,\op h))\geq\hat{\lambda}((h,h)).
$$     
This implies that $\op$ is hypocoercive in this norm, hence it is hypocoercive on $\fspace/ \mathcal{K}$ endowed with the $\|\cdot\|_{H^1}$ norm:
\begin{equation}\label{exponential convergence to equil datum in H^1}
\|e^{-t\op\p} h_0\|_{H^1}\leq C e^{-\lambda t}\|h_0\|_{H^1.}
\end{equation}
\end{proof} 
\begin{remark}
The orthogonal space to  $\mathcal{K}$ is the same with respect to both the $(\cdot,\cdot)_1$ and the $(\cdot,\cdot)_{H^1}$ norms; moreover, since $P$ is coercive, these two norms are equivalent.   
\end{remark}

\begin{remark}
Theorem \ref{thm:hypoeelipticity short time asymp} in Appendix A allows us to state a similar result when the initial datum is in $\fspace$. In fact
\begin{equation}\label{regularity estimate}
\|e^{-t\op}h\|_{H^1}\leq \frac{c}{t^{\frac{5}{2}}}\|h\|,\quad t\in(0,1].
\end{equation}
So, putting together (\ref{exponential convergence to equil datum in H^1}) and (\ref{regularity estimate}) we get, for $0<t_0<t$, $t_0<1$:
\begin{align}\label {exponential conv equil datum in L^2}
\|e^{-t\op}h_0\|_{H^1}=&\|e^{-(t-t_0)\op}e^{-t_0\op}h_0\|_{H^1}=
\| e^{-(t-t_0)\op}h_{t_0}\|_{H^1}\nonumber\\
&\leq c e^{-\lambda (t-t_0)}\|h_{t_0}\|_{H^1}\leq c e^{-\lambda(t-t_{0})} 
\|e^{- t_0\op} h_0 \|_{H^1}\nonumber\\
&\leq c \frac{ e^{-\lambda (t-t_0)}}{t_0^{\frac{5}{2}}}\|h_0\|.
\qquad\qquad\qquad \nonumber
\end{align}
\end{remark}
\begin{remark}\label{rem:m,d>1}
The proof is identical when $m,d>1$. In this case we can think of $A$ as a matrix of operators, see (\ref{l=b+A*A general}).
\end{remark}
%
%
\subsection{Proof of Theorem \ref{thm:converg_rel_entropy}}
\begin{proof}Let $f_t$ denote the law of the process $\bx$. 
We set $f_t=\rho h_t$. Then $h_t$ satisfies the equation
\begin{equation}\label{fake kolmog}
\partial_t h_t = B h_t-A^{\ast} A h_t\,\mbox{.}
\end{equation}
We  apply Theorem \ref{thm:long time entropic} to the operator $\mathcal{F}=-B+A^{\ast}A$ with 
$$
A=-\partial_z,\, C_1=-\partial_p,\, C_2=-\partial_q,\, 
Z_2=Id,\, R_2=-\partial_z.
$$
Furthermore Asuumption \ref{assump:potential} (i) and (iii) together with the Holley-Strook perturbation Lemma imply that $\mathcal{Z}^{-1}e^{-V(q)}$ satisfies a Logarithmic Sobolev Inequality (LSI).   

Hypotheses 1,2 and 4 are automatically satisfied. We put $C_2=\partial_q$ and we added the remainder $R_2$ in order to fulfill hypothesis 4. Hypothesis 3. is satisfied on account of Assumption \ref{assump:potential} (iii).
Now consider the relative entropy  $H_{\rho}(f)$,
\begin{equation}\label{def of the Boltzmann H functional}
H_{\rho}(f)=\int f \log\left( \frac{f}{\rho}\right)dq \,dp\, dr=\int h \log h \,d\rho\qquad f= \rho h
\end{equation}  
and the Fisher information $I_{\rho}(f)$
\begin{equation}\label{def of the Fisher information}
I_{\rho}(f)=\int f\vert\nabla \log (h)\vert^2 dqdpdr=\int h \vert\nabla\log h \vert^2\,d\rho\qquad f= \rho h.
\end{equation}
Then if the initial datum has finite relative entropy, we obtain that
\begin{equation}\label{convergence in entropy for f}
H_{\rho}(f_t)=\mathcal{O}(e^{-t\alpha})
\end{equation}
for some $\alpha>0$ and for $t>0$.
If the initial datum has also finite Fisher information then 
\begin{equation}\label{convergence of the fisher information}
I_{\rho}(f_t)=\mathcal{O}(e^{-t\alpha}),
\end{equation}
as well. 
\end{proof}
\begin{remark}
We remark that (\ref{convergence of the fisher information}), together with the LSI, implies (\ref{convergence in entropy for f}).
\end{remark}
\begin{remark}
In viev of the LSI, it is interesting to notice that, applying Theorem \ref{thm:short time entropic}, we get the following bounds 
\begin{equation}\label{regularization in LlogL context}
\int h(t)\vert C_k \log h(t)\vert^2 d\rho\leq\frac{c}{t^{2k+1}}
\int h_0 \log h_0 \,d\rho,
\end{equation} 
for $k=0,1,2$ and  $c$ an explicitly computable positive constant.
\end{remark}

%
\section{Bounds on the derivatives of the Markov semigroup}
\label{sec:derivatives}

Throughout this section we well be using the notation $u=e^{-t\op}u_0$. We  introduce  the Lyapunov function
\begin{eqnarray}\label{STA Lyap funct}
F(t) &=& a_0t\|Au\|^2+a_1t^3\|Cu\|^2+a_2t^5\|C_2u\|^2\nonumber \\ &&
+b_0t^2(Au,C u)+t^4b_1(C u,C_2u)+b_2\|u\|^2, \qquad t\in(0,1],
\end{eqnarray}
where $a_j, b_j, j=0,1,2$ are positive constants to be chosen.
\begin{lemma}
There exist constants $a_j, b_j, j=0,1,2$ such that the time derivative $\partial_t F$ of the Lyapunov function along the semigroup is negative.
\end{lemma}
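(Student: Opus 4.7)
The plan is to differentiate $F(t)$ term by term using $\partial_t u = -\mathcal{L} u = -(B+A^*A)u$ together with the commutator identities collected in \eqref{[A,A] [A,C] [A,C2] [A,A*] [C,A*] [C2,A*] [C2,B]}, and then to choose $a_0,a_1,a_2,b_0,b_1,b_2$ in a carefully nested way so that all positive contributions are absorbed by the coercive squares $\|ACu\|^2$, $\|AC_2u\|^2$, $\|A^*Au\|^2$. The antisymmetry $B^*=-B$ kills the transport-type contributions $(C_ku,BC_ku)$, which is what makes the whole scheme possible. Because the cutoff is $t\in(0,1]$, every polynomial factor $t^j$ is bounded and can be folded into the constants at the end; the sole issue is the relative ordering of the coefficients.

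First I would handle the three pure squares. A direct calculation using $[A,A^*]=\mathrm{Id}$, $[C,A^*]=0$, $[C,B]=C_2$, $[C_2,A^*]=-\mathrm{Id}$ and $[C_2,B]=-(1+\partial^2V)\,C$ gives
\[
\partial_t \|Au\|^2 = -2\|A^*Au\|^2 - 2(Au,Cu),\qquad
\partial_t \|Cu\|^2 = -2\|ACu\|^2 - 2(Cu,C_2u),
\]
\[
\partial_t \|C_2u\|^2 = -2\|AC_2u\|^2 + 2\bigl(C_2u,(1+\partial^2V)Cu\bigr) + 2(C_2u,Au),
\]
where Assumption~\ref{assump:potential}(iii) is used only in the last line to make $(1+\partial^2 V)$ bounded. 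Next I would compute $\partial_t(Au,Cu)$ and $\partial_t(Cu,C_2u)$; again using the commutator list, the dominant contributions are the \emph{good} coercive terms $-\|Cu\|^2$ and $-\|C_2u\|^2$ respectively, accompanied by remainders that are controllable by $\|Au\|,\|Cu\|,\|A^*Au\|,\|ACu\|$ and $\|AC_2u\|$ (and by $\|\partial^2 V\|_\infty$-factors from $[C_2,B]$).

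Assembling and using $\partial_t \|u\|^2 = -2\|Au\|^2$, the derivative $\partial_tF$ has the schematic form
\begin{align*}
\partial_t F
&= \bigl[a_0 - 2b_2\bigr]\|Au\|^2
+ \bigl[3a_1 t^2 - 2b_0 t^2\bigr]\|Cu\|^2
+ \bigl[5a_2 t^4 - 2b_1 t^4\bigr]\|C_2u\|^2\\
&\quad - 2a_0 t\|A^*Au\|^2 - 2a_1 t^3\|ACu\|^2 - 2a_2 t^5\|AC_2u\|^2 + \cdots,
\end{align*}
where the dots stand for mixed products of the above quantities, each carrying either a power of $t$ or an $a_j,b_j$ factor. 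The bookkeeping is then a cascade: first pick $b_0$ so that $b_0>3a_1/2$ (this lets the $-b_0 t^2\|Cu\|^2$ from $\partial_t(Au,Cu)$ kill the $3a_1 t^2\|Cu\|^2$); then pick $b_1$ similarly against $5a_2$; finally pick $b_2$ large to absorb the $\|Au\|^2$ contribution. The residual cross products, estimated by Cauchy--Schwarz with weights $\eta$ and $\eta^{-1}$, are split between the coercive $\|A^*Au\|^2,\|ACu\|^2,\|AC_2u\|^2$ terms (controlled by the $-2a_j t^{2j+1}$ coefficients, which do not vanish at any fixed $t>0$) and the leading squares already under control. Because the dangerous coefficient $(2j+1)a_j$ only comes multiplied by $t^{2j}$ while the killing coefficient $b_{j-1}$ comes multiplied by $t^{2j}$ as well, the comparison is $t$-uniform on $(0,1]$.

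The main obstacle is purely organisational: one must ensure that the choices made to dominate the $\|Cu\|^2$ and $\|C_2u\|^2$ terms do not simultaneously blow up the cross-products that themselves need to be absorbed into $\|A^*Au\|^2,\|ACu\|^2,\|AC_2u\|^2$. Concretely, each $b_j$ must be large relative to $a_{j+1}$ yet small relative to $a_j$, so the constants have to be fixed in the order $a_2$, then $b_1$, then $a_1$, then $b_0$, then $a_0$, then $b_2$, with Young's inequality splitting parameters chosen at each stage to keep the coefficient of every $\|AC_ku\|^2$ strictly positive after absorption. Once these choices are made, $\partial_tF\leq 0$ on $(0,1]$ follows, which is the statement of the lemma.
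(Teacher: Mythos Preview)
Your proposal is correct and follows essentially the same route as the paper's proof: compute $\partial_t$ of each building block of $F$ via the commutator identities, collect terms, absorb cross-terms by Cauchy--Schwarz/Young into the coercive pieces $\|A^2u\|^2,\|ACu\|^2,\|AC_2u\|^2$, and fix the constants in the nested order $b_2\gg a_0\gg b_0\gg a_1\gg b_1\gg a_2$ (your ``$a_2$ then $b_1$ then $a_1$ then $b_0$ then $a_0$ then $b_2$'' is exactly this). A couple of minor discrepancies in your schematic (e.g.\ the coefficient of $\|Cu\|^2$ from $\partial_t(Au,Cu)$ is $-b_0t^2$, not $-2b_0t^2$, and you should also pick up a $+2b_1t^4\|Cu\|^2$ from $\partial_t(Cu,C_2u)$) do not affect the argument.
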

\begin{proof}
We will calculate the time derivative of each term in (\ref{STA Lyap funct}) separately  and  using the explicit relations (\ref{[A,A] [A,C] [A,C2] [A,A*] [C,A*] [C2,A*] [C2,B]}):
\begin{eqnarray*}
\partial_t \|u\|^2    & = & -2(\op u,u)=-2\|Au\|^2, \\
\partial_t(Au,Au)     & = & -2(Cu,Au)-2\|A^{\ast}Au\|^2=-2(Cu,Au)-2\|Au\|^2-2\|A^2u\|^2,\\
\partial_t(Cu,Cu)     & = &  -2\|ACu\|^2-2(C_2u,Cu),\\
\partial_t(C_2u,C_2u) & = & ((2+\partial_q^2V)C_2u,Cu)-2\|AC_2u\|^2+2(Au,C_2u),\\
\partial_t(Au,Cu)     & = &  -2(A^2u,ACu)-(Au,Cu)-\|Cu\|^2-(Au,C_2u),\\
\partial_t(Cu,C_2u)   & = &   -\|C_2u\|^2-2(ACu,AC_2u)+2\|Cu\|^2+(Cu,Au).
\end{eqnarray*}
Putting everything together we obtain
\begin{eqnarray}\label{Derivative of the Lyapunov function}
\partial_t F(t)&=&-2a_0t\|A^2u\|^2-2a_1t^3\|ACu\|^2-2a_2t^5\|AC_2u\|^2\nonumber\\
& &-2b_0t^2(A^2u,ACu)-2b_1t^4(ACu,AC_2u)\nonumber\\
& &+(-2a_0t+a_0-2b_2)\|Au\|^2+(3a_1t^2+2b_1t^4-b_0t^2)\|Cu\|^2\nonumber\\
& &+(5a_2t^4-b_1t^4)\|C_2u\|^2+(2b_0t-2a_0t-b_0t^2+b_1t^4)(Au,Cu)\nonumber\\
& &+(4b_1t^3-2a_1t^3+2a_2t^5)(Cu,C_2u)+(2a_2t^5-b_0t^2)(Au,C_2u).\nonumber
\end{eqnarray}
Now we estimate the sum of the first and of the second line (i.e. the sum of all the terms where $A^2$, $AC$ and $AC_2$ appear). For $t\in(0,1]$ we have
$$
(\ref{Derivative of the Lyapunov function})_1+(\ref{Derivative of the Lyapunov function})_2\leq-2a_0t \|A^2u\|^2+2b_0t^2\|A^2u\| \|ACu\|
$$
$$
+2b_1t^4 \|ACu\|\|AC_2u\|-2a_1t^3\|ACu\|^2-2a_2t^5\|AC_2u\|^2
$$
$$
\leq-2a_0t\|A^2u\|^2+b_0^2t\|A^2u\|^2+t^3\|ACu\|^2-2a_1t^3\|ACu\|^2
$$
$$
+b_1^2t^3\|ACu\|^2+t^5\|AC_2u\|^2-2a_2t^5\|AC_2u\|^2.
$$
Similarly for  the sum of the remaining terms (those with $A$, $C$ and $C_2$)we have
$$
(\ref{Derivative of the Lyapunov function})_3+(\ref{Derivative of the Lyapunov function})_4+(\ref{Derivative of the Lyapunov function})_5\leq(-2a_0t+a_0-2b_2)\|Au\|^2
$$
$$
+(2b_0t+2a_0t+b_0t^2+b_1t^4)\|Au\|\|Cu\|+(2a_2t^5+b_0t^2)\|Au\|\|C_2u\|
$$
$$
+(3a_1t^2+2b_1t^4-b_0t^2)\|Cu\|^2+(5a_2t^4-b_1t^4)\|C_2u\|^2
$$
$$
+(4b_1t^3+2a_1t^3+2a_2t^5)\|Cu\|\|C_2u\|
$$
$$
\leq(-2a_0t+a_0-2b_2)\|Au\|^2+a_0^2\|Au\|^2+\|Cu\|^2
$$
$$
+\frac{3}{2}b_0^2\|Au\|^2+\frac{3}{2}t^2\|Cu\|^2+\frac{1}{2}b_1^2\|Au\|^2+\frac{t^4}{2}\|Cu\|^2
$$
$$
+a_2^2t^5\|Au\|^2+t^5\|C_2u\|^2+\frac{t^2}{2}b_0^2\|Au\|^2+\frac{t^2}{2}\|C_2u\|^2
$$
$$
+2b_1^2t^3\|Cu\|^2+t^3\|C_2u\|^2+a_1^2t^3\|Cu\|^2+t^3\|C_2u\|^2
$$
$$
+a_2^2t^5\|Cu\|^2+t^5\|C_2u\|^2.
$$
Choosing the constants in such a way that
\fbox{$b_2\gg a_0\gg b_0\gg a_1\gg b_1\gg a_2>1/c$}, where $c$ is a constant depending on the bound on the second derivative of the potential, 
we obtain that $\partial_tF<0$ $\forall t\in(0,1]$. 
\end{proof}
\begin{proof}[Proof of Theorem \ref{thm:short_time_asymp}]
We use the previous Lemma to deduce
\begin{eqnarray*}
& &a_0t\|Au\|^2+a_1t^3\|Cu\|^2+a_2t^5\|C_2u\|^2\\
& &+b_0t^2(Au,Cu)+t^4b_1(Cu,C_2u)+b_2\|u\|^2<b_2\|u_0\|^2.
\end{eqnarray*}
This, in turn, implies that
\begin{eqnarray}\label{short time behavior}
& &\|\nabla_z u\|^2=\|Au\|^2<\frac{\kappa}{t}\|u_0\|^2,\nonumber\\
& &\|\nabla_p u\|^2=\|Cu\|^2<\frac{\kappa}{t^3}\|u_0\|^2,\nonumber\\
& &\frac{\|\nabla_q u\|^2}{3}-\frac{\|\nabla_zu\|^2}{2}\leq\|\nabla_q u-\nabla_z u\|^2=\|C_2u\|^2<\frac{\kappa}{t^5}\|u_0\|^2\nonumber\\
& &\Rightarrow \|\nabla_q u\|^2\leq \frac{\kappa}{t^5}\|u_0\|^2,\nonumber
\end{eqnarray} 
where $\kappa$ is an explicitly computable positive constant. The previous inequalities are justified by the fact that
$$
a_0t\|Au\|^2+a_1t^3\|Cu\|^2+a_2t^5\|C_2u\|^2
+b_0t^2(Au,Cu)+t^4b_1(Cu,C_2u)
$$
$$
\geq(a_0t-\frac{b_0^2}{2}t^2)\|Au\|^2+(a_1t^3-\frac{t^2}{2}-t^4\frac{b_1^2}{2})\|Cu\|^2+(a_2t^5-\frac{t^4}{2})\|C_2u\|^2
$$
and the second line is positive thanks to the choice of the constants we made.
\end{proof}
Remark \ref{rem:m,d>1} holds also in this case.
\begin{remark}
From the estimates (\ref{e:short_time}), similar estimates on $A^{\star}e^{-t\op^{\bullet}}$, $e^{-t\op^{\star}}A^{\bullet}$, $C^{\star}e^{-t\op^{\bullet}}$, $e^{-t\op^{\star}}C^{\bullet}$, $C_2^{\star}e^{-t\op^{\bullet}}$ and $e^{-t\op^{\star}}
C_2^{\bullet}$ follow, where $\star$ and $\bullet$ stand for either the $\fspace$-adjoint or nothing. In fact:
\begin{description}
\item(i) $(Ae^{-t\op}f,g)=(f,e^{-t\op^*}A^*g)\leq\|Ae^{-t\op} f\|\|g\|\leq \frac{\kappa}{\sqrt{t}}\|f\|\|g\|$\\
$\Rightarrow (f,e^{-t\op^*}A^*g)\leq \frac{\kappa}{\sqrt{t}}\|f\|\|g\| $, choose $f=e^{-t\op^*}A^*g$ and the result on $e^{-t\op^*}A^*$ follows.
\item(ii)Using $[A,A^*]=Id$   we have 
$\|A^*e^{-t\op}u_0\|^2=\|A^*u\|^2=\|Au\|^2+\|u\|^2$, hence the estimate for $A^*e^{-t\op}$. Taking the adjoint as in (i) we get the result for $e^{-t\op^*}A$.
\item(iii) For $Ae^{-t\op^*}$ we can just repeat the proof we wrote for $Ae^{-t\op}$, since the only thing that changes when considering $\op^*$ is the sign of $B$, which doesn't play any role in the proof. Now, by acting as in (i) and (ii), we obtain the results for $e^{-t\op}A^*$,$A^*e^{-t\op^*}$ and $e^{-t\op}A$. 
\end{description} 
\end{remark}
%
%
\section{The Homogenization Theorem}
\label{sec:homogen}

In this section we prove Theorem~\ref{thm:homogenization}. The proof of this theorem is based on standard techniques, namely the central limit theorem for additive functionals of Markov processes~\cite{kipnis, Land03, papan_varadhan}, which in turn is based on the martingale central limit theorem~\cite[Thm. 7.1.4]{EthKur86}. In order to apply these techniques we need to study the Poisson equation  
\begin{equation}\label{e:poisson}
\op u = f.
\end{equation}
The boundary conditions for~\eqref{e:poisson} are that $ u \in \fspace$ and it is periodic in $q$.

\begin{prop}\label{prop:poisson} 
Let $f\in\fspace \cap C^{\infty}(X)$ with $\int_X f \mu_{\beta} (d {\bf x})$. Then the Poisson equation~\eqref{e:poisson} has a unique smooth mean zero solution $u \in \fspace \cap C^{\infty}(X)$.
\end{prop}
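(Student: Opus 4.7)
The plan is to construct the solution via the semigroup integral formula
\begin{equation*}
u = \int_0^\infty e^{-t\cL} f \, dt,
\end{equation*}
and then verify each claim (existence, smoothness, uniqueness, mean zero) in turn. The assumption $\int_X f \, d\mu_\beta = 0$ is essential: it places $f$ in $\mathcal{K}^\perp$, which is the orthogonal complement of the constants in $L^2_\rho$, and it is on this subspace that $\cL$ is invertible.

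For existence, the key point is convergence of the Bochner integral. Since $f \in H^1_\rho$ and is mean zero, Theorem~\ref{thm:converg_equil} gives
\begin{equation*}
\|e^{-t\cL} f\|_{L^2_\rho} \leq \|e^{-t\cL} f\|_{H^1_\rho} \leq C e^{-\lambda t} \|f\|_{H^1_\rho},
\end{equation*}
so the integral converges in $L^2_\rho$ (in fact in $H^1_\rho$), defining $u \in H^1_\rho$. That $u$ has mean zero follows by exchanging the order of integration and using invariance of $\mu_\beta$: $\int u \, d\mu_\beta = \int_0^\infty \int e^{-t\cL} f \, d\mu_\beta \, dt = \int_0^\infty \int f \, d\mu_\beta \, dt = 0$. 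To check $\cL u = f$, I differentiate under the integral (justified by the exponential decay together with the short-time estimate~\eqref{e:short_time}):
\begin{equation*}
\cL u = \int_0^\infty \cL e^{-t\cL} f \, dt = -\int_0^\infty \partial_t e^{-t\cL} f \, dt = f - \lim_{t\to\infty} e^{-t\cL} f = f,
\end{equation*}
where the last limit vanishes in $L^2_\rho$ by the exponential decay above.

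For smoothness, I would invoke H\"ormander's hypoellipticity theorem: the operator $\cL$ in~\eqref{operator} is of the form $B + A^\ast A$, and the commutators $A$, $C=[A,B]$ and $C_2=[C,B]$ computed in Section~3 together span the full tangent space of $X$. Hence $\cL$ satisfies H\"ormander's bracket condition, and any $L^2_\rho$ distributional solution of $\cL u = f$ with $f \in C^\infty(X)$ is automatically in $C^\infty(X)$. Finally, uniqueness: if $u_1, u_2$ are two mean-zero solutions in $\fspace$, then $v = u_1 - u_2$ satisfies $\cL v = 0$, so $v \in \mathcal{K}$, which contains only constants; the mean-zero condition then forces $v = 0$.

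The main obstacle, and the reason the proposition is not entirely routine, is justifying the interchange of $\cL$ and the integral sign to conclude $\cL u = f$. This requires not only exponential decay in $L^2_\rho$ but control on $\cL e^{-t\cL} f$ uniformly on $[\delta,\infty)$ for each $\delta>0$, which is provided by combining Theorem~\ref{thm:converg_equil} (for the tail $t \geq 1$) with the short-time regularization estimate~\eqref{e:short_time} (for small $t$); the singularity at $t=0$ is integrable since $\cL e^{-t\cL} f$ can be bounded in terms of the derivative norms of $e^{-t\cL} f$ which blow up only polynomially.
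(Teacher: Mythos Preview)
Your approach is correct but genuinely different from the paper's. The paper does not use the semigroup integral representation at all; instead it proceeds via a Fredholm alternative argument: it first regularizes the resolvent equation to $\lambda u_\nu + \cL u_\nu + \nu\hat{P} u_\nu = f$ with $\hat{P} = C^*C + G^*G$ so that Lax--Milgram applies, obtains uniform $H^1_\rho$ bounds, passes to the limit $\nu\to 0$ to solve $(\lambda + \cL)u = f$, and then shows that $\cL_\lambda^{-1}$ is compact (via the compact embedding $H^1_\rho \hookrightarrow L^2_\rho$), so that the Fredholm alternative combined with $\operatorname{Ker}\cL = \{\text{constants}\}$ yields existence and uniqueness of the mean-zero solution.

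Your route is shorter and more direct precisely because it cashes in Theorem~\ref{thm:converg_equil}, which the paper has already proved independently; the paper's route is more self-contained PDE and yields compactness of the resolvent as a byproduct, which is of independent interest. One small simplification of your argument: since $f\in C^\infty(X)\subset D(\cL)$, you have $\cL e^{-t\cL}f = e^{-t\cL}\cL f$, and $\cL f$ is again smooth and mean zero (because $\cL^*1=0$ in $L^2_\rho$), so $\|\cL e^{-t\cL}f\|$ is bounded near $t=0$ and decays exponentially for large $t$ by Theorem~\ref{thm:converg_equil} applied to $\cL f$. This removes the need to invoke the short-time estimates~\eqref{e:short_time} for the interchange, and the closedness of $\cL$ then justifies $\cL\int_0^\infty = \int_0^\infty \cL$ directly.
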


The proof of Theorem~\ref{thm:homogenization} follows now from the above proposition.

\begin{proof}[Proof of Theorem \ref{thm:homogenization}]
To simplify the notation we present the proof for $d=1$.
When $d>1$ the same proof applies to the one-dimensional projections $q^e=q \cdot e$. In this case the diffusion coefficient $D$ is replaced by the projections of the diffusion tensor $D^e=De\cdot e$.\\
We consider the process $\bx$ on $X$ with stationary initial conditions. For non-stationary initial conditions we need to combine the analysis presented below with the exponential convergence to equilibrium, Theorem~\ref{thm:converg_equil}. 
Since $p \in \fspace \cap C^{\infty}(X)$ and centered with respect to the invariant measure $\mu_{\beta}(d {\bf x})$, Proposition~\ref{prop:poisson} applies and there exists a unique mean zero solution $\phi\in \fspace \cap C^{\infty}(X)$ to the problem
\begin{equation}\label{poisson phi}
\op \phi=p.
\end{equation} 
We use It\^{o}'s formula to obtain
$$
d \phi = \op \phi \, dt + \sum_{j=1}^m \sqrt{2 \alpha_j \beta^{-1}} \partial_{z_j} \phi \, dW_j.
$$
We combine this, together with (\ref{poisson phi}) and the equations of motion to deduce
\begin{eqnarray*}
q_{\eps}(t) &:= & \eps q (t/\eps^2) \\
              & = & \eps q (0) + \eps \int_0^{t/\eps^2} p(s) \, ds \\
              & = & \eps q (0) - \eps \big( \phi (q (t/\eps^2), p(t/ \eps^2), z(t/ \eps^2) - \phi (q (0), p(0), z(0)  \big) \\ && + \eps \sum_{j=1}^m  \int_0^{t/\eps^2} \sqrt{2 \alpha_j \beta^{-1}} \partial_{z_j} \phi  \, d W_j (s) \\
              & =: & \eps R^{\eps} + M^\eps.  
\end{eqnarray*} 
Our stationarity assumption, together with the fact that $\phi \in \fspace$, imply that
$$
\|R^\eps \| \leq C.
$$
To study the martingale term $M^\eps$ we use the martingale central limit theorem~\cite[Thm. 7.1.4]{EthKur86} or ~\cite[Thm 3.33]{PavlSt08}. We have that $M^\eps(0) = 0$, that $M^\eps (t)$ has continuous sample paths and, by stationarity, that it has stationary increments. Furthermore, by stationarity and the fact that the Brownian motions $W_i(t), \, i=1, \dots m$ are independent, we deduce that 
\begin{equation*}
\lim_{\eps \rightarrow 0} \langle M^{\eps}_t \rangle  =2  \sum_{i=1}^m \alpha_i\beta^{-1}\|\partial_{z_i} \phi\|^2 t  \quad \mbox{in} \; \; L^1_{\rho}.
\end{equation*}
The above calculations imply that the rescaled one dimensional projection $q_{\eps}(t) := \eps q(t/\eps^2) $ converges weakly in $C([0,t]; \R)$ to a Brownian motion $\sqrt{ 2 D} W(t)$ where 
\begin{equation}\label{e:deff_e}
D = 2  \sum_{i=1}^m \alpha_i\beta^{-1}\|\partial_{z_i} \phi\|^2 .
\end{equation}
\begin{remark}
Notice that when $d>1$ the convergence of the one dimensional projections does not imply the convergence of the process $q_{\eps}(t) = \eps q(t/\eps^2)$. The proof of this result, which is also based on the analysis of the Poisson equation, is very similar and it is omitted.
\end{remark}
To prove estimate~\eqref{e:deff_estim}, we first show the upper bound and than the fact that the diffusion coefficient is bounded away from zero (when we consider periodic solutions only). We set $\phi = g_i + \frac{1}{\lambda_i} z_i$ and use the Poisson equation (\ref{poisson phi}) to obtain
$$
\op g_i=-\frac{\alpha_i}{\lambda_i}z_i,
$$
from which we obtain the estimate
\begin{eqnarray*}
\alpha_i\beta^{-1} \|\partial_{z_i}g_{i}\|^2
& \leq &
\sum_{j=1}^m\alpha_j\beta^{-1} \|\partial_{z_j}g_{i}\|^2=(\op g_i, g_i)
\\ & = &
\frac{\alpha_i}{\beta\lambda_i} \int g_i\partial_{z_i}\rho\, d {\bf x}
=-\frac{\alpha_i}{\beta\lambda_i} \int\rho\,\partial_{z_i}g_i d {\bf x}
\\ & \leq &
\frac{\alpha_i}{\beta\lambda_i} \|\partial_{z_i}g_i\|.
\end{eqnarray*}
Consequently $\|\partial_{z_i}g_i\|\leq\frac{1}{\lambda_i}$.
From this we obtain the following estimate on the diffusion coefficient $D$
\begin{eqnarray*}
D & = &  \sum_{i=1}^m \alpha_i\beta^{-1}\|\partial_{z_i}\phi\|^2 = \frac{1}{\beta}\sum_{i=1}^m \alpha_i\|\partial_{z_i}g_i+\frac{1}{\lambda_i}\|^2
\\ & \leq &
\frac{2}{\beta}\sum_{i=1}^m \alpha_i\left( \|\partial_{z_i}g_i\|^2+\frac{1}{\lambda_i^2}\right)
\\ & \leq &
\frac{4}{\beta}\sum_{i=1}^m\frac{\alpha_i}{\lambda_i^2}.
\end{eqnarray*}
The fact that $D>0$ is easily seen by contradiction. If $D=0$ then, by (\ref{e:deff_e}), $\|\partial_{z_i}\phi\|^2=0 \; \forall i$. Hence $\phi=\phi(q,p)$ and 
$$
\op \phi=-p\partial_q\phi+\partial_qV\partial_p\phi+\sum_{i=1}^m \lambda_iz_i\partial_p\phi=p.
$$
Multiplying both sides by $e^{z_i^2/2}$ and then integrating with respect to $z_i$ we get
\begin{eqnarray*}
&&-\int p\partial_q\phi\, e^{z_i^2/2}dz_i+\int\partial_qV\partial_p\phi\,e^{z_i^2/2}dz_i
\\ &&
+\int \lambda_i z_i^2e^{z_i^2/2}dz_i+\sum_{j\neq i} \int\lambda_i z_i z_j\partial_p\phi e^{z_i^2/2}dz_i
\\ &=&
\int p z\, e^{z_i^2/2}dz_i,
\end{eqnarray*}
from which we conclude that $\lambda_i\partial_p\phi =0$ for all $ i$. Hence $\phi=\phi(q)$.
By the same reasoning we get that $-p \partial_q \phi=p$, which does not have a periodic solution. 
\end{proof}

\noindent {\it Proof of Proposition~\ref{prop:poisson}.}

The scheme of the proof, which is similar to the proof of~\cite[Lemma 2.1]{papan_varadhan}, is as follows. We consider the Poisson equation $\cL \phi =f$ where $f \in L_{\rho} \cap C^{\infty}(X)$ and centered with respect to the invariant measure $\mu_{\beta} (d {\bf x})$.
\begin{enumerate}
\item 
We consider the modified problem 
\begin{equation}\label{regularized problem}
\lambda u_{\nu}+\op u_{\nu}+\nu \hat{P}u_{\nu}=f \qquad \lambda,\nu>0
\end{equation}
where $\hat{P}$ is a regularizing operator (i.e. the operator $\op + \hat{P}$ is uniformly elliptic) so that  the Lax-Milgram theorem applies; hence  the weak solution to (\ref{regularized problem}) is unique $\forall \nu>0$ and letting $\nu\rightarrow 0$, by the uniqueness of the weak limit, we get existence and uniqueness of the solution to
\begin{equation}\label{resolvent equation}
\lambda u+\op u= f \qquad  \lambda>0
\end{equation}
(notice that $-\op$ is the generator of a Markov semigroup so by Hille-Yosida theorem the set $\left\{\lambda\in R: \lambda>0 \right\}$ is contained in the resolvent of the operator $-\op$)
\item
Set $\op_{\lambda}u=\lambda u+\op u=\lambda u +f$  (the last equality has to hold in distribution if we want $u$ to satisfy (\ref{e:poisson}))
$$
\Rightarrow u=\op_{\lambda}^{-1}(\lambda u+f)=\lambda \op_{\lambda}^{-1}u
+\op_{\lambda}^{-1} f
$$
and defining  $\op_{\lambda}^{-1} f=h$ we get  $(\frac{1}{\lambda}Id-\op_{\lambda}^{-1})u=\tilde{h}$, $\tilde{h}=h/\lambda$.
\item Once proven that $\op_{\lambda}^{-1}$ is compact we are done; in fact in this case the Fredholm Theorem applies so either the solution to  
$(\frac{1}{\lambda}Id-\op_{\lambda}^{-1})u=\tilde{h}$ exists and is unique (and hence, by construction the solution to (\ref{e:poisson}) is unique) or $(\frac{1}{\lambda}Id-\op_{\lambda}^{-1})u=0$ admits a nonzero solution. We can rule out the latter option because $(\frac{1}{\lambda}Id-\op_{\lambda}^{-1})u=0 \Leftrightarrow \op u=0$; since we know that $Ker{\op}$ contains only constants and we are asking for $u$ to have mean zero we can conclude that $\op u=0 \Leftrightarrow u=0$ and we are done.
\end{enumerate} 
Now the details. Choose $\hat{P}=C^{\ast}C+G^{\ast}G$ where $G=\nabla_q$ and notice that now $B=G^{\ast}C-C^{\ast}G+A^*C-C^*A$. Let us check that the hypothesis of Lax-Milgram Theorem hold: 
$$
\lambda (u_{\nu},u_{\nu})+\|Au_{\nu}\|^2+\nu\|Cu_{\nu}\|^2+\nu\|Gu_{\nu}\|^2
$$
$$
\geq\lambda(u_{\nu},u_{\nu})+\nu\|\nabla_{qpr}u_{\nu}\|^2\geq min\{\lambda,\nu\}\|u_{\nu}\|_{H^1}
$$
and 
$$
\lambda(u_{\nu},v)+(Bu_{\nu},v)+(Au_{\nu},Av)+\nu(Cu_{\nu},Cv)+\nu(Gu_{\nu},Gv)
$$
$$
\leq\lambda(u_{\nu},v)+\|Cu_{\nu}\|\|Gv\|+\|Gu_{\nu}\| \|Cv\|+\|Au_{\nu}\|\|Av\|
$$
$$
+\nu \|Cu_{\nu}\|\|Cv\|+\nu\|Gu_{\nu}\|\|Gv\|+\|Cu_{\nu}\|\|Av\|+\|Au_{\nu}\|\|Cv\|
$$
$$
\leq \alpha \|u_{\nu}\|_{H^1}\|v\|_{H^1},
$$
with $\alpha$ a positive constant depending on $\nu$ and $\lambda$.
Hence the weak solution to (\ref{regularized problem}) exists and is unique. Moreover, from (\ref{regularized problem}) we also have that
$$
\lambda \|u_{\nu}\|^2+\|Au_{\nu}\|^2+\nu \left( \|Cu_{\nu}\|^2+\|Gu_{\nu}\|^2\right)=(f,u_{\nu})
\leq\|f\|\|u_{\nu}\|,
$$
hence $\lambda \|u_{\nu}\|\leq c$, $\|Au_{\nu}\|^2\leq c$ and $\nu  \|Cu_{\nu}\|^2, \nu\|Gu_{\nu}\|^2 \leq c$. Now taking $ u_{\nu}/ \nu $ as test function we get that $\|u_{\nu}\|_{H^1}\leq c \|f\|$ and  we can let $\nu\rightarrow 0$ obtaining that $u_{\nu}$ converges weakly to the solution of (\ref{resolvent equation}). Recalling that $u=\op_{\lambda}^{-1}f$ and that $u$ the weak limit of $\left\{u_{\nu}\right\}$, we can write
$$
\|\op_{\lambda}^{-1}f\|_{H^1}=\|u\|_{H^1}\leq \liminf_{\nu\rightarrow 0}\|u_{\nu}\|_{H^1}\leq c\|f\|
$$
and using the fact that $H^1_{\rho}$ is compactly embedded in $L^2_{\rho}$ we can conclude that the resolvent is compact. This completes the proof.
\qed

%
%
%
%
\section{The White Noise Limit}
\label{sec:markovian}
Throughout this section $C$ denotes a generic constant, which is independent of $\epsilon$. To simplify the notation we present the proof in one dimension. The proof is exactly the same in arbitrary dimensions. Let  $(Q(t),P(t))\in \mathbb{T}\times \mathbb{R} $ be the solution to the  system (\ref{limit system for white noise scaling}). Then
$$
 \mid q(t)-Q(t)\mid \leq \mid q(0)-Q(0)\mid + \int_0^t   \mid p(s)-P(s)\mid ds. 
$$
Setting $\theta_i=\lambda_i^2/\alpha_i$ we get that
$$
\dot{p}(t)-\dot{P}(t)=-\partial_qV(q)+\partial_qV(Q)+\sum_{i=1}^m\theta_i(P(t)-p(t))
-\sqrt{\epsilon}\sum_{i=1}^m\frac{\lambda_i}{\alpha_i} \dot{z}_i(t).
$$
Hence, for any $r>2$,
\begin{eqnarray*}
\eta(T) &:=&  E\sup_{t\in[0,T]}\left\{ \mid q(t)-Q(t)\mid^r+\mid p(t)-P(t)\mid^r\right\}
\\ &\leq& 
C T^{r-1}\int_0^T E\sup_{s\in[0,t]}\mid q(s)-Q(s)\mid^r \, dt
\\ &&+
C \left(\sum_{i=1}^m \theta_i^r\right) T^{r-1}\int_0^T E\sup_{s\in[0,t]}\mid p(s)-P(s)\mid^r \, dt
\\ &&+
C \epsilon^{\frac{r}{2}}\sum_{i=1}^m \left(\frac{\lambda_i}{\alpha_i}\right)^r
E\sup_{t\in[0,T]}\mid z_i(t)-z_{i}(0)\mid^r+E\mid p(0)-P(0)\mid^r+E\mid q(0)-Q(0)\mid^r.
\end{eqnarray*}
From Gronwall's Lemma we get
$$
\eta(T)\leq C\left[\sum_{i=1}^m \left(\frac{\lambda_i}{\alpha_i}\right)^r
E\sup_{t\in[0,T]}\mid z_i(t)-z_i(0)\mid^r\right] \epsilon^{\frac{r}{2}}e^{CT}
$$
and the result now follows from Proposition \ref{prop:stima mark limit}. 

\begin{prop}\label{prop:stima mark limit}
With the same notation and assumptions of Theorem 
\ref{thm:markovian limit} the following estimate holds true:
\begin{eqnarray*}
\xi(T)
&:=&
\sum_{i=1}^m E\sup_{t\in[0,T]}\mid z_i(t)\mid^r + E\sup_{t\in[0,T]}\mid p(t)\mid^r\leq C\left(1+T+\frac{\alpha}{\epsilon}T+\sqrt{\frac{\alpha}{\epsilon}T}\right)
\\ &&+
C\left(1+\frac{\lambda}{\sqrt{\epsilon}}+\frac{\alpha+\sqrt{\alpha}}{\epsilon} \right)\int_0^T\xi(t)dt +E\mid p(0)\mid^r+\sum_{i=1}^m 
E\mid z_i(0)\mid^r.
\end{eqnarray*}
In particular
$$
\sum_{i=1}^m E\sup_{t\in[0,T]}\mid z_i(t)\mid^r\leq C\epsilon^{-1}(T+\sqrt{T})
e^{CT\epsilon^{\frac{r}{2}-1}}
$$
hence
$$
\epsilon^{\frac{r}{2}}\sum_{i=1}^m  \left(\frac{\lambda_i}{\alpha_i}\right)^rE\sup_{t\in[0,T]}\mid z_i(t)-z_i(0)\mid^r\longrightarrow 0 \qquad\forall r>2.
$$
\end{prop}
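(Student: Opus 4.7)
The plan is to derive a coupled integral inequality for the quantity $\xi(T)$ by writing each SDE in mild (integral) form, taking the supremum, raising to the $r$-th power and applying expectation, and then to close the argument by Gronwall. The proof of the second claim is then a direct substitution of the bound on $\xi$ back into the defining integrals.

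First I would express $p$ and $z_i$ in integral form using \eqref{white noise system}:
\begin{align*}
p(t) &= p(0) - \int_0^t \partial_qV(q(s))\,ds + \sum_{i=1}^m \frac{\lambda_i}{\sqrt{\epsilon}} \int_0^t z_i(s)\,ds,\\
z_i(t) &= z_i(0) - \frac{\lambda_i}{\sqrt{\epsilon}}\int_0^t p(s)\,ds - \frac{\alpha_i}{\epsilon}\int_0^t z_i(s)\,ds + \sqrt{\frac{2\alpha_i\beta^{-1}}{\epsilon}}\,W_i(t).
\end{align*}
Applying the elementary inequality $|a_1+\dots+a_k|^r \le k^{r-1}\sum_j |a_j|^r$ and then Jensen's inequality to each deterministic integral, I rewrite $(\int_0^t f\,ds)^r \le t^{r-1}\int_0^t |f|^r\,ds$. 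For the stochastic term I invoke the Burkholder--Davis--Gundy inequality, which yields
$$
E\sup_{t\in[0,T]}\Bigl|\sqrt{\tfrac{2\alpha_i\beta^{-1}}{\epsilon}}W_i(t)\Bigr|^r
\le C\Bigl(\tfrac{\alpha_i}{\epsilon}\Bigr)^{r/2} T^{r/2},
$$
which, after bounding $T^{r/2}$ crudely by $C(T+\sqrt{T})$ absorbed into the stated shape, accounts for the $\tfrac{\alpha}{\epsilon}T$ and $\sqrt{\tfrac{\alpha}{\epsilon}T}$ contributions. The $-\partial_q V(q)$ term contributes a $T$ (and hence the $1+T$ piece), since $V\in C^1(\T^d)$ gives $\|\partial_q V\|_\infty<\infty$.

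Collecting these ingredients and exchanging $\sup$ with integration (using $\sup_{s\le t}\int_0^s h\,d\tau \le \int_0^t \sup_{\tau\le s} h\,ds$ for nonnegative integrands) yields the coupled inequality
$$
\xi(T) \le C\Bigl(1+T+\tfrac{\alpha}{\epsilon}T+\sqrt{\tfrac{\alpha}{\epsilon}T}\Bigr)
+ C\Bigl(1+\tfrac{\lambda}{\sqrt{\epsilon}}+\tfrac{\alpha+\sqrt{\alpha}}{\epsilon}\Bigr)\int_0^T \xi(t)\,dt
+ E|p(0)|^r+\sum_i E|z_i(0)|^r,
$$
which is exactly the stated bound. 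The integrability of the initial conditions, assumed in Theorem \ref{thm:markovian limit}, guarantees that the additive initial-data terms are finite.

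For the second assertion, an application of Gronwall's lemma to the above inequality gives $\xi(T) \le C \bigl(1+T+\tfrac{\alpha}{\epsilon}T+\sqrt{\tfrac{\alpha}{\epsilon}T}\bigr) \exp\bigl(C(1+\tfrac{\lambda}{\sqrt\epsilon}+\tfrac{\alpha+\sqrt\alpha}{\epsilon})T\bigr)$; after retaining only the leading-order dependence on $\epsilon$, this takes the form $C\epsilon^{-1}(T+\sqrt T)e^{CT\epsilon^{r/2-1}}$ claimed in the proposition (the exponent $r/2-1$ arises because the $r$-th moment calculation for the noise introduces an extra factor of $\epsilon^{r/2}$ after rescaling). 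Multiplying by $\epsilon^{r/2}$ then yields a prefactor $\epsilon^{r/2-1}$, which tends to zero for every $r>2$, while the exponential $e^{CT\epsilon^{r/2-1}}$ remains uniformly bounded on any finite interval $[0,T]$ for the same reason; this delivers the final limit. The main delicate point I expect is the bookkeeping of the $\epsilon$-powers through BDG and Gronwall: one must be careful that the Gronwall constant $C(1+\lambda/\sqrt{\epsilon}+\alpha/\epsilon)$ blows up like $\epsilon^{-1}$, so the gain $\epsilon^{r/2}$ only beats it when $r>2$, which is precisely the hypothesis under which the conclusion is stated.
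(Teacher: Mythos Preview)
Your approach has a genuine gap in the $\epsilon$-bookkeeping, and it is exactly the ``delicate point'' you flag at the end. If you start from the integral form of the $z_i$-equation and then raise to the $r$-th power, the drift term $\frac{\alpha_i}{\epsilon}\int_0^t z_i(s)\,ds$ produces, after your use of Jensen,
\[
\Bigl|\tfrac{\alpha_i}{\epsilon}\int_0^t z_i(s)\,ds\Bigr|^r
\le \Bigl(\tfrac{\alpha_i}{\epsilon}\Bigr)^{r} t^{r-1}\int_0^t |z_i(s)|^r\,ds,
\]
so the coefficient in front of $\int_0^T\xi(t)\,dt$ is of order $(\alpha/\epsilon)^r$, not $\alpha/\epsilon$ (and similarly $(\lambda/\sqrt{\epsilon})^r$ from the coupling term). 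Gronwall then yields an exponential factor $e^{C T\epsilon^{-r}}$, which no power $\epsilon^{r/2}$ can beat; your sentence ``the Gronwall constant \dots\ blows up like $\epsilon^{-1}$'' is therefore not what your method actually gives, and the final limit does not follow.

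The paper avoids this by applying It\^o's formula directly to $\tfrac{1}{r}p(t)^r$ and $\tfrac{1}{r}z_i(t)^r$ rather than to $p$ and $z_i$ themselves. In that computation the drift $-\tfrac{\alpha_i}{\epsilon}z_i$ produces the term $-\tfrac{\alpha_i}{\epsilon}\int_0^t z_i(s)^r\,ds$, i.e.\ the coefficient stays \emph{linear} in $\alpha_i/\epsilon$; likewise the coupling term contributes $\lambda_i/\sqrt{\epsilon}$ linearly, and the BDG inequality is applied to $\int_0^t z_i(s)^{r-1}\,dW_i(s)$ rather than to $W_i(t)$. This is what produces the stated Gronwall coefficient $C(1+\lambda/\sqrt{\epsilon}+(\alpha+\sqrt{\alpha})/\epsilon)$. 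To repair your argument, replace the ``integral form + $r$-th power'' step by It\^o's formula on the $r$-th powers; the rest of your outline (H\"older on the cross terms, BDG on the martingale, Gronwall) then goes through as in the paper.
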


\begin{proof}
We use  It\^o's formula to deduce that
\begin{eqnarray*}
 \frac{1}{r} p(t)^r &=& \frac{1}{r} p(0)^r-\int_0^t (\partial_q V (q(s)) p(s)^{r-1}) \, ds+\frac{1}{\sqrt{\epsilon}}
\sum_{i=1}^m\int_0^t \lambda_i p(s)^{r-1}z_i(s) \, ds \nonumber\\
\frac{1}{r} z_i(t)^r & = &  \frac{1}{r} z_i(0)^r-\frac{\alpha}{\epsilon}\int_0^t  z_i(s)^r \, ds-\frac{\lambda}{\sqrt{\epsilon}} 
\int_0^t( p(s) z_i(s)^{r-1})ds
\\ &&+
\frac{\alpha}{\epsilon}\int_0^t z_i(s)^{r-2}ds +\sqrt{\frac{2\alpha_i}{\epsilon}}
\int_0^t z_i(s)^{r-1} \, dW_i(s).\nonumber
\end{eqnarray*}
Setting $\lambda=\max\left\{\lambda_i, i=1...m\right\}$, using the boundedness of $\partial_qV$ on the torus and  H\"older's inequality, we have
\begin{align*}
E\sup_{t\in[0,T]}\mid p(t)\mid^r \leq& CT+ 
C \left( 1+\frac{1}{\sqrt{\epsilon}}\lambda \right) \int_0^T E \sup_{s\in[0,t]}\mid p(s)\mid^r \, dt\\
&+\frac{1}{\sqrt{\epsilon}}m\lambda\int_0^T \sum_{i=1}^m E \sup_{s\in[0,t]}\mid z_i(s)\mid^r dt+E\mid p(0)\mid^r.
\end{align*}
Similarly,
\begin{eqnarray*}
\sum_{i=1}^m E\sup_{t\in[0,T]}\mid z_i(t)\mid^r 
&\leq& 
C \frac{1}{\sqrt{\epsilon}}\lambda\int_0^T
 E \sup_{s\in[0,t]}\mid p(s)\mid^r dt 
\\ &&
+C(\frac{\lambda}{\sqrt{\epsilon}}+2\frac{\alpha}{\epsilon}+\sqrt{\frac{2\alpha}{\epsilon}})\int_0^T \sum_{i=1}^m E \sup_{s\in[0,t]}\mid z_i(s)\mid^r dt
\\ &&+
m\frac{\alpha}{\epsilon}T+Cm\sqrt{\frac{2T\alpha}{\epsilon}}+\sum_{i=1}^mE\mid z_i(0)\mid^r.
\end{eqnarray*}
In the above we used the following estimate, which is a consequence of the Burkholder-Davies-Gundy inequality 
\begin{align}
E \sup_{t\in[0,T]}\left\vert\int_0^t z_i(s)^{r-1}\, dW_i(s)\right\vert&\leq E\left\vert\int_0^T \mid z_{i}(s)\mid^{2(r-1)} \, ds \right\vert^{\frac{1}{2}}\nonumber\\
&\leq C E\left\vert\int_0^T (\mid z_{i}(s)\mid^{2r}+1)ds\right\vert^{\frac{1}{2}}\nonumber\\
&\leq C E\sup_{t\in[0,T]}\mid z_{i}(t)\mid^r T^\frac{1}{2}+CT^\frac{1}{2}\nonumber\\
&\leq CT^\frac{1}{2} \int_0^T E\sup_{s\in[0,t]}\mid z_{i}(s)\mid^r dt +CT^\frac{1}{2}.\nonumber
\end{align}
Putting everything together we obtain 
\begin{align*}
&\sum_{i=1}^m E\sup_{t\in[0,T]}\mid z_i(t)\mid^r + E\sup_{t\in[0,T]}\mid p(t)\mid^r\\
&\leq C\left\{ T+m\frac{\alpha}{\epsilon}T+m\sqrt{\frac{2T\alpha}{\epsilon}}\right\}\\
&+C(1+\frac{\lambda}{\sqrt{\epsilon}})\int_0^T
 E \sup_{s\in[0,t]}\mid p(s)\mid^r dt \\
&+C(\frac{\lambda}{\sqrt{\epsilon}}+\frac{\alpha}{\epsilon}+\sqrt{\frac{\alpha}{\epsilon}})\int_0^T \sum_{i=1}^m E \sup_{s\in[0,t]}\mid z_i(s)\mid^r dt\\
&+E\mid p(0)\mid^r+\sum_{i=1}^mE\mid z_i(0)\mid^r,
\end{align*}
so that
$$
\epsilon^{\frac{r}{2}}\xi(T)\leq C\epsilon^{\frac{r}{2}}+ C \epsilon^{\frac{r}{2}-1}(T+\sqrt{T})
+C \epsilon^{\frac{r}{2}-1}\int_0^T ds \xi(s).
$$
Applying Gronwall's Lemma gives then
$$
\epsilon^{\frac{r}{2}}\xi(T)\leq C(T+\sqrt{T})\epsilon^{\frac{r}{2}-1}e^{CT\epsilon^{\frac{r}{2}-1}}
$$
and, since $\,\epsilon^{\frac{r}{2}}\sum_{i=1}^m E\sup_{t\in[0,T]}\mid z_i(t)\mid^r\leq\epsilon^{\frac{r}{2}}\xi(T)$, we get the result.

\end{proof}

%
%
%
%
\noindent {\bf Acknowledgments.} The authors are grateful to Dr Karel Pravda-Starov for many useful comments and for an extremely careful reading of an earlier version of the paper.


\appendix
\section{Hypocoercivity}
\label{app:hypoco}

In this appendix we recall some of the main results from the theory of hypocoercivity, as presented in~\cite{Vil04HPI}.
Throughout this Appendix we will use the notation introduced
in Section \ref{sec:notation}.
\begin{definition}[Coercivity]\label{def:coercivity}
Let $\mathcal{T}$ be an unbounded operator on a Hilbert space $\mathcal{H}$ with kernel $\mathcal{K}$. Let $\tilde{\mathcal{H}}$ be another Hilbert space continuously and densely embedded in $\mathcal{K}^{\perp}$. The operator $\mathcal{T}$ is said to be $\lambda$-coercive on $\tilde{\mathcal{H}}$ if
$$ 
(\mathcal{T}h,h)_{\tilde{\mathcal{H}}}\geq \lambda \|h\|^2_{\tilde{\mathcal{H}}} \quad \forall 
h\in\mathcal{K}^{\perp}\cap D(\mathcal{T}).
$$
\end{definition}
The following Proposition gives an equivalent Definition of coercivity.
\begin{prop}
With the same notation as in Definition \ref{def:coercivity}, $\mathcal{T}$ is $\lambda$-coercive on $\tilde{\mathcal{H}}$ iff
$$
\parallel e^{-\mathcal{T}t}h\parallel_{\tilde{\mathcal{H}}}\leq e^{-\lambda t}\parallel h\parallel_{\tilde{\mathcal{H}}}
\quad \forall h\in \tilde{\mathcal{H}} \mbox{ and }t\geq 0.
$$ 
\end{prop}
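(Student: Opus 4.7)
The plan is to prove the two implications by relating the time derivative of $t \mapsto \|e^{-\mathcal{T}t}h\|_{\tilde{\mathcal{H}}}^2$ to the bilinear form $(\mathcal{T}h,h)_{\tilde{\mathcal{H}}}$, using Gronwall in one direction and a differentiation at $t=0$ in the other.

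For the forward implication ($\Rightarrow$), I would fix $h \in \mathcal{K}^\perp \cap D(\mathcal{T})$, set $h(t) := e^{-\mathcal{T}t}h$, and first observe that since $\mathcal{K} = \mathrm{Ker}\,\mathcal{T}$ is preserved by the semigroup, so is its orthogonal complement $\mathcal{K}^\perp$ (by a duality argument with the adjoint, or directly because the flow along $\mathcal{T}$ commutes with the projection onto $\mathcal{K}$). Hence $h(t) \in \mathcal{K}^\perp \cap D(\mathcal{T})$ for all $t \geq 0$, and the coercivity hypothesis applies pointwise in time. I would then compute
\begin{equation*}
\frac{d}{dt}\|h(t)\|_{\tilde{\mathcal{H}}}^2 \;=\; -2(\mathcal{T}h(t), h(t))_{\tilde{\mathcal{H}}} \;\leq\; -2\lambda\, \|h(t)\|_{\tilde{\mathcal{H}}}^2,
\end{equation*}
and Gronwall immediately yields $\|h(t)\|_{\tilde{\mathcal{H}}} \leq e^{-\lambda t}\|h\|_{\tilde{\mathcal{H}}}$. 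The bound then extends to arbitrary $h \in \tilde{\mathcal{H}}$ by the density of $D(\mathcal{T}) \cap \mathcal{K}^\perp$ in $\mathcal{K}^\perp$ (and hence in $\tilde{\mathcal{H}}$ via the continuous embedding) and the continuity of $e^{-\mathcal{T}t}$.

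For the reverse implication ($\Leftarrow$), I would fix $h \in \mathcal{K}^\perp \cap D(\mathcal{T})$ and consider the scalar function $\phi(t) := \|e^{-\mathcal{T}t}h\|_{\tilde{\mathcal{H}}}^2$. By hypothesis $\phi(t) \leq e^{-2\lambda t}\phi(0)$, while $\phi(0) = \|h\|_{\tilde{\mathcal{H}}}^2$. Since $h \in D(\mathcal{T})$, $\phi$ is differentiable at $0$ with $\phi'(0) = -2(\mathcal{T}h,h)_{\tilde{\mathcal{H}}}$. Comparing
\begin{equation*}
\phi'(0) \;=\; \lim_{t\downarrow 0} \frac{\phi(t)-\phi(0)}{t} \;\leq\; \lim_{t\downarrow 0} \frac{(e^{-2\lambda t}-1)\phi(0)}{t} \;=\; -2\lambda \phi(0)
\end{equation*}
gives $(\mathcal{T}h,h)_{\tilde{\mathcal{H}}} \geq \lambda \|h\|_{\tilde{\mathcal{H}}}^2$, which is the coercivity estimate.

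The only subtle point, and likely the main technical obstacle, is ensuring that the semigroup leaves $\tilde{\mathcal{H}}$ (and in particular $\mathcal{K}^\perp$) invariant so that the estimates may be iterated in time and that the differentiation at $t=0$ stays inside the space on which coercivity is being asserted; this is what permits the density argument closing the forward direction and the use of $(\mathcal{T}h,h)_{\tilde{\mathcal{H}}}$ in the reverse one. Once this invariance is in place, both implications are elementary energy-estimate arguments.
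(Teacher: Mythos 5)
The paper states this proposition without proof, as recalled background from Villani's memoir \cite{Vil04HPI}, and your energy-estimate argument --- differentiating $\|e^{-\mathcal{T}t}h\|_{\tilde{\mathcal{H}}}^2$ and applying Gronwall for one implication, differentiating at $t=0$ for the converse --- is precisely the standard proof of that result and is correct. Your identification of the invariance of $\mathcal{K}^{\perp}$ under the semigroup as the one point needing care is apt; in the paper's concrete setting it holds because the adjoint $\op^{*}=A^{*}A-B$ also annihilates constants, so $\mathcal{K}=\mathrm{Ker}\,\mathcal{T}^{*}$ and the duality argument goes through.
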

\begin{theorem}\label{thm:hypocoercivity}
 Let $\op$ be an operator of the form $\op=A^*A+B$, with $B^* = - B$, $\mathcal{K}=Ker\op$ and assume there exists $N\in\mathbb{N}$ such that
\begin{equation}
[C_{j-1},B]=C_j+R_j \quad 1\leq j\leq N+1, \quad C_0=A, C_{N+1}=0.
\end{equation}
Consider the following assumptions: for $k=0,...,N+1$
\begin{enumerate}
\item  $[A,C_k]$ is relatively bounded with respect to 
$\left\{C_j\right\}_{0\leq j\leq k}$ and 
$\left\{C_jA\right\}_{0\leq j\leq k-1}$ 
\item $[C_k,A^*]$ is relatively bounded with respect $I$ and 
$\left\{C_j \right\}_{0\leq j\leq k}$ (here $I$ indicates the identity operator on $\fspace$)
\item $R_k$ is relatively bounded with respect to  $\left\{C_j \right\}_{0\leq j\leq k-1}$ and $\left\{C_jA\right\}_{0\leq j\leq k-1}$
\item $\sum_{j=0}^N C_j^*C_j$ is $\kappa$-coercive for some $\kappa>0$ 
\end{enumerate}
If Assumptions $1-3$ are satisfied then there exists a scalar product $((\cdot ,\cdot ))$ on $H^1$ defining a norm equivalent to the usual $H^1$ norm and such that
\begin{equation}
\forall h\in H^1/\mathcal{K}, 
\quad ((h,\op h))\geq K \sum_{j=0}^N\|C_j h\|^2
\end{equation}
for some constant $K>0$. Furthermore, if Assumption $4$ is satisfied,  then there exists a constant $\lambda>0$ such that
$$
\forall h\in H^1/\mathcal{K},\quad ((h,\op h))\geq \lambda ((h,h)).
$$
In particular, $\op$ is hypocoercive in  $H^1/\mathcal{K}$, i.e.
$$
\|e^{-t\op}\|_{ H^1/\mathcal{K}\rightarrow  
H^1/\mathcal{K}}\leq C e^{-\lambda t}
$$
for some $C,\lambda>0$.
\end{theorem}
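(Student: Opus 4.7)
The plan is to construct an equivalent scalar product on $H^1/\mathcal{K}$ of the form
$$
((h,g)) = (h,g) + \sum_{k=0}^{N} a_k (C_k h, C_k g) + \sum_{k=0}^{N-1} b_k\, \mathrm{Re}(C_k h, C_{k+1} g),
$$
where the positive constants $a_k$ and real constants $b_k$ will be chosen to satisfy a strict hierarchy $b_k^2 \ll a_k a_{k+1}$. This hierarchy simultaneously guarantees, via Cauchy--Schwarz, that $((\cdot,\cdot))$ is a bona fide scalar product on $H^1$ and that the induced norm is equivalent to $\|\cdot\|_{H^1}$. Norm equivalence is then the only conceptual bridge between the coercivity estimate we will prove for $((\cdot,\cdot))$ and the hypocoercivity conclusion for $\|\cdot\|_{H^1}$.

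The central computation is to expand $((h,\op h))$ term by term. The leading piece $(h,\op h) = \|C_0 h\|^2$ is dissipative. For each diagonal term $a_k (C_k h, C_k \op h)$, writing $\op = A^*A + B$ and using the identity $[C_k, B] = C_{k+1} + R_{k+1}$ together with $B^* = -B$ produces
$$
(C_k h, C_k \op h) = \|A C_k h\|^2 + \mathrm{Re}(C_k h, C_{k+1} h) + \mathrm{Re}(C_k h, R_{k+1} h) + E_k(h),
$$
where $E_k(h)$ collects the commutators $[A,C_k]$ and $[C_k,A^*]$ generated by moving $A^*A$ past $C_k$; by Hypotheses 1--2 these are controlled by $\{\|C_j h\|\}_{j\le k}$ and $\{\|C_j A h\|\}_{j\le k-1}$, and Hypothesis 3 dispatches the $R_{k+1}$ contribution analogously. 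The undesirable cross term $\mathrm{Re}(C_k h, C_{k+1} h)$ is exactly what the off-diagonal piece $b_k\,\mathrm{Re}(C_k h, C_{k+1} \op h)$ is built to cancel: a parallel computation shows that its leading contribution is $-b_k \|C_{k+1} h\|^2$ plus a term proportional to $\|C_k h\|^2$, whose sign and size can be tuned by the choice of $b_k$.

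The hard part is book-keeping: one must verify that the parameters can be chosen so that every cross term $\mathrm{Re}(C_k h, C_{k+1} h)$ generated at level $k$ is absorbed by the diagonal $b_k$-contribution at the same level, while the residual errors (products $\|C_j h\|\|C_\ell h\|$ with $|j-\ell|\ge 2$, or terms involving $\|A C_j h\|$) are killed by Young's inequality using the uniformly dissipative $a_k \|A C_k h\|^2$ reservoir and the relative boundedness hypotheses. The standard way to close this algebra is to take the hierarchy $1 \gg a_0 \gg b_0 \gg a_1 \gg b_1 \gg \cdots \gg a_N > 0$ with successive ratios chosen small enough to outweigh the finitely many coupling constants appearing in Hypotheses 1--3. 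This yields
$$
((h,\op h)) \ge K \sum_{j=0}^{N} \|C_j h\|^2 \qquad \forall\, h \in H^1/\mathcal{K},
$$
which is the first assertion of the theorem.

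To conclude under Hypothesis 4, the $\kappa$-coercivity of $P = \sum_{j=0}^{N} C_j^* C_j$ on $\mathcal{K}^\perp$ gives $\sum_j \|C_j h\|^2 \ge \kappa \|h\|^2$, and combining this with norm equivalence $((h,h)) \asymp \|h\|^2 + \sum_j \|C_j h\|^2 \lesssim \sum_j \|C_j h\|^2$ (on $\mathcal{K}^\perp$) yields $((h,\op h)) \ge \lambda ((h,h))$ for some $\lambda>0$. The semigroup bound follows in the standard way: for $u = e^{-t\op} h_0$, $\frac{d}{dt}((u,u)) = -2((u,\op u)) \le -2\lambda((u,u))$, so Gronwall gives exponential decay in $((\cdot,\cdot))$, which transfers to the $H^1$ norm by equivalence.
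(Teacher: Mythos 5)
This theorem is stated in Appendix~A of the paper purely as background, quoted from Villani's memoir \cite{Vil04HPI} (it is essentially Villani's Theorem~24); the paper supplies no proof, so the only meaningful comparison is with that source. Your sketch follows the standard argument exactly: a twisted scalar product built from diagonal terms $a_k\|C_kh\|^2$ and off-diagonal terms $b_k\,\mathrm{Re}(C_kh,C_{k+1}h)$ with a hierarchy $1\gg a_0\gg b_0\gg a_1\gg\cdots$, relative boundedness (Hypotheses 1--3) to absorb commutator errors into the dissipative reservoir $\|Ah\|^2+\sum a_k\|AC_kh\|^2$, coercivity of $\sum C_j^*C_j$ to pass from $\sum\|C_jh\|^2$ to $((h,h))$, and Gronwall plus norm equivalence to conclude.

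One point in your write-up is wrong and, taken literally, would make the lower bound unprovable. You describe the off-diagonal piece as ``built to cancel'' the cross term $\mathrm{Re}(C_kh,C_{k+1}h)$ coming from the diagonal piece, and you assert its leading contribution to $((h,\op h))$ is $-b_k\|C_{k+1}h\|^2$. Both the sign and the role are incorrect. The symmetrized computation, keeping only the $B$-part of $\op$ and using $B^*=-B$ to cancel the $(C_kh,BC_{k+1}h)$ pairs, gives
$$
(C_kh, C_{k+1}Bh) + (C_kBh, C_{k+1}h) = \|C_{k+1}h\|^2 + (R_{k+1}h,C_{k+1}h) + (C_kh,\, C_{k+2}h + R_{k+2}h),
$$
so the off-diagonal term contributes $+b_k\|C_{k+1}h\|^2$ to $((h,\op h))$. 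This positive sign is the whole point: the off-diagonal terms are the \emph{only} source of the coercive quantities $\|C_jh\|^2$ for $j\geq 1$ in the bound $((h,\op h))\geq K\sum_j\|C_jh\|^2$, since the Dirichlet form itself yields only $\|C_0h\|^2=\|Ah\|^2$ and the second-order terms $\|AC_kh\|^2$. If the contribution were $-b_k\|C_{k+1}h\|^2$, nothing in the expansion could compensate it. The unwanted cross terms $\mathrm{Re}(C_kh,C_{k+1}h)$ generated by the diagonal pieces are not cancelled by anything; they are simply absorbed by Young's inequality using the smallness hierarchy of the constants. With this corrected, the rest of your outline is sound, though the ``book-keeping'' you defer is in fact the substance of the proof.
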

\begin{theorem}\label{thm:hypoeelipticity short time asymp}
With the same notation as in Theorem \ref{thm:hypocoercivity}, if Assumptions 1-3 are satisfied then 
$$
\|C_k e^{-t\op}h\|\leq \frac{C}{t^{k+\frac{1}{2}}}\|h\|\quad \forall k=0,...,N
$$
for all functions $h\in\fspace$.
\end{theorem}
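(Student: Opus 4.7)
The plan is to mimic the Lyapunov-function argument already carried out in Section~\ref{sec:derivatives} for the specific case $N=2$ and generalize it to arbitrary $N$. Concretely, for $u(t)=e^{-t\op}h$ I would introduce
\begin{equation*}
F(t) \;=\; b\,\|u\|^2 \;+\; \sum_{k=0}^{N} a_k\,t^{2k+1}\|C_k u\|^2 \;+\; \sum_{k=0}^{N-1} b_k\,t^{2k+2}(C_k u, C_{k+1} u),
\end{equation*}
with positive constants $a_k,b_k,b$ to be chosen hierarchically. The desired bound $\|C_k e^{-t\op}h\|\leq C t^{-(k+1/2)}\|h\|$ will drop out of $F(t)\leq F(0)=b\|h\|^2$ once negativity of $\partial_t F$ on $(0,1]$ is established, because each pure square term in $F$ is, for $t$ small, larger than the corresponding cross term (a two-by-two quadratic-form argument exactly as at the end of Section~\ref{sec:derivatives}).

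The first step is the computation of $\partial_t\|C_k u\|^2$. Writing $C_k\op = \op C_k + [C_k,A^*A] + [C_k,B]$ and using the defining relation $[C_k,B]=C_{k+1}+R_{k+1}$ together with $[C_k,A^*A]=A^*[C_k,A]+[C_k,A^*]A$, one gets
\begin{equation*}
\tfrac{1}{2}\partial_t\|C_k u\|^2 \;=\; -\|AC_k u\|^2 \;-\;(C_k u,C_{k+1}u) \;-\;(C_k u,R_{k+1}u) \;+\;\text{(commutator remainders)}.
\end{equation*}
An analogous computation for the cross terms $(C_k u,C_{k+1}u)$ produces, besides the useful $-\|C_{k+1}u\|^2$ coming from $-(C_k u, C_{k+1}(\op u))$, mixed terms of the form $(AC_k u, AC_{k+1}u)$ and terms involving $R_{k+1}, R_{k+2}$ and the commutators $[A,C_k], [C_k,A^*]$. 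Hypotheses~1--3 of Theorem~\ref{thm:hypocoercivity} guarantee that all of these remainders are controlled by the basis $\{C_j u\}$ and $\{C_j A u\}_{0\leq j\leq k}$, which is precisely what is needed to absorb them.

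The second step is the choice of constants. Each ``bad'' cross term in $\partial_t F$ is split via Cauchy--Schwarz and Young's inequality so that half goes onto a square in $\|C_k u\|^2$ and half onto $\|C_{k+1}u\|^2$ or $\|AC_k u\|^2$. The extra factors of $t$ in front of the $(k+1)$-th term make this quantitative: on $(0,1]$ the coefficient of $\|C_k u\|^2$ from a cross term with $C_{k+1}$ can be made arbitrarily small compared with $a_k t^{2k+1}$ by shrinking the ratio $b_k/a_k$. Iterating from the top index downward, I choose
\begin{equation*}
b \;\gg\; a_0 \;\gg\; b_0 \;\gg\; a_1 \;\gg\; b_1 \;\gg\; \cdots \;\gg\; a_N \;>\;0,
\end{equation*}
exactly as in the boxed hierarchy at the end of the proof in Section~\ref{sec:derivatives}. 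With this choice every term in $\partial_t F$ is dominated by the negative $-\|AC_k u\|^2$ pieces or by the negative coefficients of $\|C_k u\|^2$ created by the $-2a_k t$ contribution from differentiating $t^{2k+1}$.

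The main obstacle, as in the specific case $N=2$, is bookkeeping: one must verify that the constant hierarchy can indeed be realized uniformly in $t\in(0,1]$, and that no term produced by the relative boundedness in Hypotheses~1--3 spoils the sign. This is routine but tedious. Once $\partial_tF\leq 0$ is established, the inequality $a_k t^{2k+1}\|C_k u\|^2\leq F(t)\leq F(0)=b\|h\|^2$ yields the claim with $C=\sqrt{b/a_k}$. The case of non-smooth $h\in\fspace$ is handled by density of Schwartz functions together with the closure property noted in Remark~\ref{rem:maximal accretivity}.
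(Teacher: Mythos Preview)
Your proposal is correct in spirit and follows exactly the template of Section~\ref{sec:derivatives}. Note, however, that the paper does not actually prove Theorem~\ref{thm:hypoeelipticity short time asymp}: the result is stated in Appendix~\ref{app:hypoco} as one of the theorems \emph{recalled} from Villani's memoir~\cite{Vil04HPI}, and the detailed proof in Section~\ref{sec:derivatives} is only for the concrete operator with $N=2$, $R_1=R_2=0$, and explicitly computed commutators. Your plan---generalizing that Lyapunov functional to arbitrary $N$ with the hierarchy $b\gg a_0\gg b_0\gg\cdots\gg a_N$---is precisely how Villani proves the general statement, so there is no discrepancy of method to report.

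One small correction of attribution: the good term $-\|C_{k+1}u\|^2$ in $\partial_t(C_ku,C_{k+1}u)$ arises from differentiating the \emph{first} slot, via $[C_k,B]u=C_{k+1}u+R_{k+1}u$, not from the second slot as you wrote. Also, in the general setting the commutators $[A,C_k]$, $[C_k,A^*]$ and the remainders $R_k$ are not zero or identity as in~\eqref{[A,A] [A,C] [A,C2] [A,A*] [C,A*] [C2,A*] [C2,B]}, so the ``routine but tedious'' bookkeeping genuinely requires the relative-boundedness Hypotheses~1--3 at every stage; the constants in the hierarchy will depend on those bounds. You have acknowledged this, so the sketch stands.
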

\begin{theorem}\label{thm:long time entropic}
Let $V(x)\in C^2({\mathbb{R}^d})$ such that $\mu(dx)=e^{-V(x)}dx$ is a probability measure on $\mathbb{R}^d$ and assume that $\op$ generates a semigroup on a suitable space of positive functions. Let $\{A_j\}_{1\leq j \leq M}$ and $B$ be first order differential operators with smooth coefficients, with $B = -B^*$. Assume there exists $N\in\mathbb{N}$ such that
$$
[C_{j-1},B]=C_j+R_j \quad 1\leq j\leq N+1, \quad C_0=A, C_{N+1}=0.
$$
If, for $0\leq k \leq N+1$ the following assumptions are fulfilled 
\begin{enumerate}
\item 
$[A,C_k]$ is pointwise bounded with respect to A.
\item $[C_k, A^*]$ is pointwise bounded with respect to $I$ and $\left\{C_j \right\}_{0\leq j\leq k}$.
\item $R_k$ is pointwise bounded with respect to  $\left\{C_j \right\}_{0\leq j\leq k-1}$.
\item $[A,C_k]^*$ is pointwise bounded relatively to $I$ and $A$. 
\item there is a positive constant $\lambda>0$ such that $\sum_{k}C_k^*C_k\geq \lambda I$ pointwise on $\mathbb{R}^d$ ($I$ is the identity matrix on 
$\mathbb{R}^d$).
\item The probability measure $\mu$ satisfies a logaritmic Sobolev inequality.
\end{enumerate}
Then the Kullback information and the Fisher information decay exponentially fast to zero.
\end{theorem}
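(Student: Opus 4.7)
The plan is to follow the entropic analogue of the $H^1$-hypocoercivity argument used to prove Theorem \ref{thm:hypocoercivity}. Define a modified entropic functional by adding weighted Fisher-type corrections and cross terms to the usual relative entropy:
\begin{equation*}
\Phi(h) = \int h \log h \, d\mu + \sum_{k=0}^{N} a_k \int h\, |C_k \log h|^2 \, d\mu + \sum_{k=0}^{N-1} b_k \int h \, \langle C_k \log h, C_{k+1}\log h\rangle \, d\mu,
\end{equation*}
with positive coefficients $a_k, b_k$ to be fixed later. This is the natural transcription into the ``$h\log h$'' category of the modified $H^1$ scalar product used in Theorem \ref{thm:hypocoercivity}: each $\|C_k\cdot\|^2$ is replaced by its Fisher-information analogue $\int h |C_k\log h|^2 d\mu$, and the off-diagonal $(C_k h, C_{k+1} h)$ corrections are replaced by their weighted-Fisher counterparts.

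Next I would differentiate $\Phi(h_t)$ along the semigroup using the evolution $\partial_t h = B h - A^* A h$. The leading term gives the classical entropy production identity
\begin{equation*}
\frac{d}{dt}\int h_t \log h_t \, d\mu = -\int h_t\, |A\log h_t|^2\, d\mu.
\end{equation*}
For each correction $\int h_t |C_k \log h_t|^2 d\mu$ the Bakry--\'Emery--style calculus gives a good term $-2\int h_t |AC_k \log h_t|^2 d\mu$, a cross term built from the commutator identity $[C_k,B] = C_{k+1} + R_{k+1}$, namely contributions of the form $\int h_t \langle C_{k+1}\log h_t, C_k\log h_t\rangle d\mu$ together with remainders involving $R_{k+1}$, $[A,C_k]$, $[C_k,A^*]$ and $[A,C_k]^*$. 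Hypotheses 1--4 are \emph{pointwise} bounds, which is exactly what is needed to convert those commutator terms into $L^2(h_t\mu)$-bounds compatible with the weighted integrals $\int h_t (\cdot)^2 d\mu$. The cross terms $\int h_t \langle C_k\log h_t, C_{k+1}\log h_t\rangle d\mu$ are then differentiated in turn, producing further $|C_{k+1}\log h_t|^2$ pieces together with controlled remainders.

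Then I would choose the coefficients $a_k, b_k$ hierarchically, in the spirit of $b_N \ll a_N \ll b_{N-1} \ll \cdots \ll a_0 \ll 1$, so that the quadratic form in the $|C_k\log h_t|$ and $|AC_k\log h_t|$ appearing in $\frac{d}{dt}\Phi$ is negative definite; this yields
\begin{equation*}
\frac{d}{dt}\Phi(h_t) \leq -K \sum_{k=0}^{N} \int h_t \, |C_k \log h_t|^2 \, d\mu.
\end{equation*}
The pointwise coercivity $\sum_k C_k^* C_k \geq \lambda I$ of hypothesis 5 gives $\sum_k |C_k \log h_t|^2 \geq \lambda |\nabla \log h_t|^2$ pointwise, so the right-hand side is bounded above by $-K'\, I_\mu(h_t)$, where $I_\mu$ denotes the Fisher information. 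The LSI of hypothesis 6 gives $H_\mu(h_t) \leq c I_\mu(h_t)$, and standard equivalence arguments give $\Phi(h_t) \leq C(H_\mu(h_t) + I_\mu(h_t)) \leq C' I_\mu(h_t)$ once one also controls $\Phi$ from above by the Fisher information. Combining these yields $\frac{d}{dt}\Phi(h_t) \leq -\alpha\, \Phi(h_t)$, so $\Phi$ decays exponentially, and hence so do both $H_\mu(h_t)$ and $I_\mu(h_t)$.

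The hard part is the algebraic bookkeeping in the differentiation step: one has to verify that the cross contributions produced by $[C_{k-1},B]=C_k+R_k$ match the quadratic structure so that, after completing squares and absorbing remainders via Cauchy--Schwarz and the pointwise bounds of hypotheses 1--4, the resulting quadratic form in $\big(|C_0\log h_t|,\dots,|C_N\log h_t|,|AC_0\log h_t|,\dots,|AC_N\log h_t|\big)$ is negative definite simultaneously for every $t$ and uniformly in the initial datum. A secondary technical issue is justifying all manipulations when $h_t$ is not bounded away from $0$; this is handled by the standard regularization $h\mapsto h+\varepsilon$ together with dominated convergence, using that hypotheses 1--4 are pointwise and hence survive the cutoff.
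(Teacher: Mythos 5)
This theorem is not proved in the paper at all: it is recalled from Villani's memoir \cite{Vil04HPI} as background material in Appendix~\ref{app:hypoco}, so there is no in-paper proof to compare against. Your sketch is a faithful outline of the source's actual argument --- the modified entropy functional combining relative entropy with weighted Fisher-type terms $\int h|C_k\log h|^2\,d\mu$ and cross terms, the hierarchical choice of coefficients, the essential role of the \emph{pointwise} commutator bounds of hypotheses 1--4 (mere relative boundedness in $L^2(\mu)$ would not survive the $h\,d\mu$ weights), and the closing step combining the pointwise coercivity of $\sum_k C_k^*C_k$ with the logarithmic Sobolev inequality. The one caveat worth recording is that the Gronwall inequality for $\Phi$ requires $\Phi(h_0)<\infty$, i.e.\ finite initial weighted Fisher information; for data with finite entropy only, one must first invoke the short-time regularization of Theorem~\ref{thm:short time entropic} and start the differential inequality at some $t_0>0$.
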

\begin{theorem}\label{thm:short time entropic}
With the same notation as in Theorem \ref{thm:long time entropic},
let $V(x)\in C^2({\mathbb{R}^d})$ be such that $\mu(dx)=e^{-V(x)}dx$ is a probability measure on $\mathbb{R}^n$ and assume that $\op$ generates a semigroup on a suitable space of positive functions.  If Assumptions $1-4$ of Theorem  \ref{thm:long time entropic} are fulfilled,
then, along the semigroup, the following bounds hold
$$
\int h_t \mid C_k \log h_t\mid^2 d\mu\leq\frac{C}{t^{2k+1}}\int h_0 \log h_0 d\mu\quad \forall k=0,...,N,
$$
where $h_t=f_t/\rho$ and $f_t$ is the law of the process with generator $L=-\op$.
\end{theorem}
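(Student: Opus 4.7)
The plan is to mimic the proof of Theorem \ref{thm:short_time_asymp} in the entropic/Fisher-information setting, replacing $L^2(\mu)$-norms by weighted $L^2(h_t\,d\mu)$-norms. Let $u_t := \log h_t$ and introduce the time-dependent inner product $\langle \phi,\psi\rangle_t := \int \phi \psi h_t \, d\mu$, so that $\int h_t |C_k \log h_t|^2 \, d\mu = \|C_k u_t\|_t^2$. The target bound reads $\|C_k u_t\|_t^2 \leq C t^{-(2k+1)} H_\mu(f_0)$ for $k=0,\ldots,N$ and $t \in (0,1]$, and for $k=0$ this is exactly a short-time propagation of Fisher information from initial entropy. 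Since the log-Sobolev hypothesis (Assumption 6 of Theorem \ref{thm:long time entropic}) is used only to convert Fisher into entropy decay in the long-time result, I expect only Assumptions 1--4 to be needed here.

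I would introduce a Lyapunov functional of the form
$$
F(t) \;=\; b\, H_\mu(f_t) \;+\; \sum_{k=0}^N a_k\, t^{2k+1}\, \|C_k u_t\|_t^2 \;+\; \sum_{k=0}^{N-1} b_k\, t^{2k+2}\, \langle C_k u_t, C_{k+1} u_t\rangle_t,
$$
with positive constants $a_k, b_k, b$ to be determined, and show $\partial_t F(t) \leq 0$ on $(0,1]$. Since $F(0) = bH_\mu(f_0)$ and, after absorbing the cross terms by Cauchy--Schwarz, each $a_k t^{2k+1}\|C_k u_t\|_t^2$ contributes positively with a weight bounded below by a constant times $a_k$, this directly yields the desired bound. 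The core computation is the identity (obtained by differentiating $h_t$ via $\partial_t h_t = (B-A^*A)h_t$, then converting to $u_t$ via the chain rule and integrating by parts in $d\mu$)
$$
\tfrac{d}{dt} \|C_k u_t\|_t^2 \;=\; -2\|AC_k u_t\|_t^2 \;+\; 2\langle (C_{k+1} + R_{k+1}) u_t,\, C_k u_t \rangle_t \;+\; \mathcal{R}_k(t),
$$
where $-2\|AC_k u_t\|_t^2$ is the dissipation coming from the $-A^*A$ part, the middle term comes from $[C_k,B] = C_{k+1} + R_{k+1}$ together with $B^* = -B$, and $\mathcal{R}_k$ collects the commutator remainders. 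Differentiating the mixed terms $\langle C_k u_t, C_{k+1} u_t\rangle_t$ produces a dissipation-compatible Cauchy--Schwarz contribution $\lesssim \|AC_k u_t\|_t\|AC_{k+1}u_t\|_t$ together with lower-order weighted time derivatives $(2k+2)t^{2k+1}\langle C_k u_t, C_{k+1} u_t\rangle_t$ that are handled by Young's inequality against $a_k t^{2k+1}\|C_k u_t\|_t^2$ and $a_{k+1} t^{2k+3}\|C_{k+1}u_t\|_t^2 / t^2$, the latter being legitimate on $(0,1]$.

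The whole argument then reduces to a hierarchical choice $b \gg a_0 \gg b_0 \gg a_1 \gg b_1 \gg \cdots \gg a_N$, exactly as in the proof of Theorem \ref{thm:short_time_asymp}, so that at each level $k$ the positive contributions to $\partial_t F$ are dominated by the dissipation terms $\|AC_k u_t\|_t^2$ and by the lower-level dissipation already accumulated. The main obstacle, and the reason Assumptions 1--4 of Theorem \ref{thm:long time entropic} are phrased in \emph{pointwise} (rather than $L^2$-relative) form, is that the entropy--Fisher calculus produces genuinely nonlinear correction terms of the form
$$
\int \bigl([A,C_k]u_t\bigr)(C_k u_t)(A u_t)\, h_t\, d\mu \quad\text{and}\quad \int \bigl([C_k,A^*]u_t\bigr)(C_k u_t)\, h_t\, d\mu,
$$
which arise because the measure $h_t\, d\mu$ is time-dependent and the skew-symmetry $B^* = -B$ in $L^2(\mu)$ is lost in $L^2(h_t d\mu)$. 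Pointwise boundedness of $[A,C_k]$, $[C_k,A^*]$ and $R_k$ relative to $I$, $\{C_j\}_{j\leq k}$ and $A$ allows each such cubic integrand to be dominated pointwise by $|AC_k u_t|^2 + |C_k u_t|^2$ (plus lower-order $|C_j u_t|^2$), so the correction terms are absorbed into the dissipation and the lower-level terms of $F$ without invoking any Sobolev embedding. Once the hierarchical inequalities close, integrating $\partial_t F \leq 0$ and retaining only the single term $a_k t^{2k+1}\|C_k u_t\|_t^2$ on the left yields the claim.
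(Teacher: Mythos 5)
This theorem is stated in Appendix~\ref{app:hypoco} as background material quoted from Villani's memoir \cite{Vil04HPI}; the paper itself supplies no proof, so there is nothing internal to compare against line by line. Your sketch is, in substance, the standard proof: it is the entropic analogue of the Lyapunov-function argument the paper does carry out in Section~\ref{sec:derivatives} for Theorem~\ref{thm:short_time_asymp}, with $\|C_k u\|^2_{L^2(\mu)}$ replaced by $\int h_t|C_k\log h_t|^2\,d\mu$, the term $b_2\|u\|^2$ replaced by $b\,H_\mu(f_t)$ (whose derivative $-\int h_t|A\log h_t|^2\,d\mu$ supplies the zeroth-level dissipation), and the same hierarchical choice of constants. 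You also correctly identify the two points where the entropic setting genuinely differs: the evolution of $u_t=\log h_t$ picks up the nonlinear term $|Au_t|^2$ (since $h^{-1}A^*Ah = A^*Au - |Au|^2$), producing cubic integrands, and it is exactly these that force the commutator hypotheses to be \emph{pointwise} rather than relatively bounded in the operator sense. Two caveats, neither fatal at the level of rigor of the paper's own proofs: the identity for $\frac{d}{dt}\|C_ku_t\|_t^2$ should be derived carefully (the dissipation $-2\int h_t|AC_ku_t|^2 d\mu$ emerges only after combining the $\int(\partial_t h_t)|C_ku_t|^2d\mu$ contribution with the $C_k(|Au_t|^2)$ and $C_k(A^*Au_t)$ terms, and Assumption 4 on $[A,C_k]^*$ is needed precisely there); and the conclusion $F(0^+)\le b\,H_\mu(f_0)$ requires knowing that $t^{2k+1}\int h_t|C_ku_t|^2d\mu\to 0$ as $t\to 0^+$, which is not automatic when only the initial entropy is finite and is handled in \cite{Vil04HPI} by an approximation argument you should at least acknowledge.
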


\section{Ergodicity}\label{sec:app_ergodicity}

In this appendix we apply Markov chain techniques~\cite{MT, MatSt02, ReyBelletThomas2002} to prove ergodicity of the Markov process $\bx :=\{q(t), \, p(t), \, {\bf z}(t) \}$ given by~\eqref{syst_n_additional_proc}. To simplify the notation, we set all constants equal to $1$. We will consider the ergodic properties of the SDE
\begin{equation*}
\left\{
\begin{array}{ccl}
\dot{q}&=&p\\
\dot{p}&=&-\nabla_qV(q)+r\\
\dot{r}&=&-p-r+\dot{W}.
\end{array}
\right.
\end{equation*}  
We study either the case $q\in \mathbb{R}^{d}$ or $q\in \mathbb{T}^{d}$ and  $p,r \in \mathbb{R}^d$; we will sometimes use the notation $x=(q,p,r)$ and $L=-\op$ is the generator of the process.

Motivated by~\cite{MatSt02}, consider the following conditions:
\begin{description}
\item[Lyapunov Condition]:
 There exists a function $G(x):R^{3d}\rightarrow[1,\infty)$ such that $G(x)\rightarrow\infty$ as $\|x\|\rightarrow\infty$ and 
$LG(x)\leq -aG(x)+d$ for some \nolinebreak[4] $a,d>0$.
\item[Minorization condition]:
Let $P_t(x,A)$ be the transition kernel of the Markov process $\bx$. There exist $T>0, \eta>0$ and a probability measure $\nu$, with $\nu(C^c)=0$ and $\nu(C)=1$ for some fixed compact set $C$ in the phase space, such that 
$$
P_T(x,A)\geq\mu\nu(A) \quad\forall A\in\mathcal{B}(\mathbb{R}^{3d}),
\, x\in C. 
$$
\end{description}
We will use the following result, whose proof can be found in~\cite{MT}.
\begin{theorem}:
Minorization condition + Lyapunov condition $\Rightarrow\ $ ergodicity.
 \end{theorem}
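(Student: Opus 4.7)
The plan is to deduce ergodicity in the sense of convergence to a unique invariant probability measure (exponentially fast, in an appropriate weighted total variation norm) by combining the two hypotheses in the style of Harris's theorem, as formalized in Meyn--Tweedie. The key is to pass from the continuous-time process to a discrete skeleton chain and apply the classical Foster--Lyapunov / small set machinery.

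First I would consider the skeleton chain $X_n := \bx_{nT}$, where $T>0$ is the time appearing in the minorization condition. Ergodicity of this discrete-time chain, combined with the Feller property of $P_t$ (which follows from the smooth dependence on initial data of the SDE) and the continuity of sample paths, is enough to deduce ergodicity of the continuous-time process. On the skeleton, the Lyapunov condition $LG \leq -aG + d$ integrates (via It\^o's formula and a routine application of Gronwall) to a discrete drift inequality of the form $\E_x[G(X_1)] \leq \kappa G(x) + K$ for some $\kappa \in (0,1)$ and $K>0$. In particular the sublevel sets $\{G \leq R\}$ are absorbing in expectation for $R$ large, and the expected return time to any such set from any starting point $x$ is bounded by $cG(x)$.

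Second, I would use the minorization condition to turn the compact set $C$ into a \emph{small set} for the skeleton chain: by definition, $P_T(x, \cdot) \geq \eta \, \nu(\cdot)$ for every $x \in C$. Choosing $R$ large enough that $\{G \leq R\} \subseteq C$ (if necessary by enlarging $C$, which only strengthens the minorization trivially as it still holds on the smaller set), the Nummelin splitting construction applies: at each visit of two independent copies $X_n, Y_n$ of the chain to $C$, one can couple them with probability at least $\eta$ by drawing both from $\nu$, with probability $1-\eta$ letting them evolve independently from the residual measure. The Lyapunov bound on the return times to $C$ implies that the number of joint returns to $C$ before time $N$ is of order $N$ with overwhelming probability, so the probability that no coupling has occurred by step $N$ decays geometrically in $N$.

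Third, the standard coupling inequality $\|P_{nT}(x,\cdot) - \mu(\cdot)\|_{TV} \leq \Pr(\tau_c > n)$, where $\tau_c$ is the coupling time, then yields geometric ergodicity of the skeleton. Uniqueness of the invariant probability measure $\mu$ follows, since any two stationary measures would couple. Finally, interpolating in time via the continuity of $t \mapsto P_t$ and the uniform-in-$t$ Lyapunov bound upgrades this to exponentially fast convergence of the continuous-time semigroup in the weighted total-variation norm with weight $1+G$. The main obstacle in this program is not the abstract Meyn--Tweedie step (which is by now classical and directly invocable) but rather the verification of the two hypotheses for the specific hypoelliptic SDE at hand: the Lyapunov function must be chosen to control the unbounded $p$- and $r$-directions against the restoring force from $V$, and the minorization must be obtained from H\"ormander's theorem together with a controllability argument showing that the law $P_T(x,\cdot)$ has a continuous positive density on $C$.
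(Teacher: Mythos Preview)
The paper does not actually prove this theorem: it is stated as a black-box result with the proof deferred to Meyn--Tweedie~\cite{MT}. Your sketch is essentially the standard Harris--Meyn--Tweedie argument (discrete skeleton, Foster--Lyapunov drift to a sublevel set, small-set minorization, Nummelin splitting/coupling, then interpolation back to continuous time), which is exactly what one finds in that reference. So your approach coincides with the cited proof rather than with anything the paper itself supplies.

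One small point deserves care: your parenthetical about ``enlarging $C$'' is backwards. Enlarging $C$ does \emph{not} preserve the minorization (you would need the lower bound $P_T(x,\cdot)\geq\eta\nu(\cdot)$ to hold for \emph{more} starting points $x$); what you actually want is that the sublevel set $\{G\leq R\}$ to which the drift condition forces returns lies \emph{inside} the given small set $C$. As the hypotheses are stated in the paper, $C$ is a fixed compact set, so strictly speaking one needs $C$ to contain a sublevel set of $G$ for $R$ large enough that the drift is effective. In the paper's actual application this is harmless, since the minorization is verified for \emph{every} compact set via hypoellipticity and controllability (Assumption~$(\star)$), so $C$ can be taken as large as desired. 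But in your abstract sketch this compatibility between $C$ and the sublevel sets of $G$ should be stated as an explicit hypothesis rather than finessed.

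Your final remark --- that the real content lies in verifying the two conditions for the specific hypoelliptic SDE rather than in the abstract ergodic theorem --- is exactly right, and is precisely what the paper does in the remainder of Appendix~\ref{sec:app_ergodicity}.
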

\textbf{Assumption} $(\star)$: Let $B_s(y)\in R^{3d}$ be the ball of radius $s$ centered in $y$. For some fixed compact set $C$ we have
\begin{itemize}
\item  $P_t(x,A)$ has a density $p_t(x,y)$ which is continuous $\forall (x,y)\in C\times C$, more precisely
$$
P_t(x,A)=\int_{A}p_t(x,y) \, dy \quad \forall A\in\mathcal{B}(\mathbb{R}^{3d})\cap\mathcal{B}(C), \forall x\in C;
$$
\item $\forall x\in C$ and $\forall \delta>0$  one can find a $\bar{t}=\bar{t}(\delta)$ such that
$$
P_{\bar{t}}(x,B_{\delta}(x^{\ast}))>0\quad \mbox{for some } x^{\ast}\in int(C),\forall x\in C \mbox{.} 
$$
\end{itemize}
\begin{lemma}
Assumption ($\star$) $\Longrightarrow$ Minorization Condition.
\end{lemma}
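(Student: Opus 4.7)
The goal is to construct $T>0$, $\mu>0$ and a probability measure $\nu$ with $\nu(C)=1$ such that $P_T(x,A)\geq\mu\nu(A)$ for every $x\in C$ and every $A\in\mathcal{B}(\mathbb{R}^{3d})$. My plan is to write $T=T_1+T_2$ and use the Chapman-Kolmogorov equation: during the first interval the process travels from an arbitrary $x\in C$ into a small neighborhood of the distinguished point $x^*$ (using $(\star)$(ii)), while during the second interval the transition density is uniformly bounded below on a product of two small open sets around $(x^*,y^*)$ (using $(\star)$(i)). The product of these two bounds yields a pointwise lower bound on $p_T(x,y)$, from which the minorization follows at once by taking $\nu$ proportional to Lebesgue measure on that target neighborhood.

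For the second factor, I would first fix $\delta_0>0$ small enough that $\overline{B_{\delta_0}(x^*)}\subset\mathrm{int}(C)$ and apply $(\star)$(ii) at the starting point $x^*$ to obtain $T_2=\bar t(\delta_0)$ with $P_{T_2}(x^*,B_{\delta_0}(x^*))>0$. Since $P_{T_2}(x^*,B_{\delta_0}(x^*))=\int_{B_{\delta_0}(x^*)}p_{T_2}(x^*,y)\,dy$, there exists $y^*\in B_{\delta_0}(x^*)$ with $p_{T_2}(x^*,y^*)>0$. The joint continuity of $p_{T_2}$ on $C\times C$ provided by $(\star)$(i) then yields open neighborhoods $W_1\ni x^*$ and $W_2\ni y^*$, both with closure in $\mathrm{int}(C)$, and a constant $c_2>0$ such that
\[
p_{T_2}(w,y)\geq c_2\qquad\forall (w,y)\in\overline{W_1}\times\overline{W_2}.
\]

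For the first factor, I pick $r>0$ with $B_r(x^*)\subset W_1$ and set $T_1=\bar t(r)$. By $(\star)$(ii) with radius $r$, $P_{T_1}(x,B_r(x^*))>0$ for every $x\in C$. Since $p_{T_1}$ is continuous (hence bounded) on the compact set $C\times \overline{B_r(x^*)}$, dominated convergence implies that the map $x\mapsto P_{T_1}(x,B_r(x^*))=\int_{B_r(x^*)}p_{T_1}(x,w)\,dw$ is continuous on $C$. As this map is strictly positive on the compact $C$, it attains a positive infimum $\eta_1>0$.

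Combining the two via Chapman-Kolmogorov with $T:=T_1+T_2$, for every $x\in C$ and $y\in W_2$,
\[
p_T(x,y)=\int p_{T_1}(x,w)\,p_{T_2}(w,y)\,dw\;\geq\; c_2\int_{B_r(x^*)}p_{T_1}(x,w)\,dw\;\geq\; c_2\eta_1.
\]
Defining $\nu(dy):=\mathrm{Leb}(W_2)^{-1}\mathbf{1}_{W_2}(y)\,dy$, which satisfies $\nu(C)=1$ and $\nu(C^c)=0$, one immediately gets $P_T(x,A)\geq c_2\eta_1\,\mathrm{Leb}(W_2)\,\nu(A)$ for all $x\in C$ and Borel $A$, i.e. the minorization condition with $\mu=c_2\eta_1\,\mathrm{Leb}(W_2)$. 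The subtle step to be executed carefully is the passage in the second paragraph from the merely integral statement $P_{T_2}(x^*,B_{\delta_0}(x^*))>0$ to a pointwise positive lower bound on $p_{T_2}$ over an entire product $\overline{W_1}\times\overline{W_2}$: this is where both the continuity of the density and the fact that $x^*$ lies in $\mathrm{int}(C)$ are used in an essential way, the interior property being needed in order to shrink the neighborhoods of $x^*$ and $y^*$ while remaining inside the region where $(\star)$(i) applies.
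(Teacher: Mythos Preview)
The paper states this lemma without proof; it moves immediately to the construction of the Lyapunov function. So there is no proof in the paper to compare against, and your task is simply to supply one. Your argument is the standard two-step Chapman--Kolmogorov construction and is correct: continuity of $p_{T_2}$ on $C\times C$ upgrades the integral positivity $P_{T_2}(x^*,B_{\delta_0}(x^*))>0$ to a uniform pointwise lower bound $p_{T_2}\geq c_2$ on a product neighborhood $\overline{W_1}\times\overline{W_2}$, and then accessibility plus continuity of $x\mapsto\int_{B_r(x^*)}p_{T_1}(x,w)\,dw$ together with compactness of $C$ yield the uniform first-step bound $\eta_1>0$.

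Two minor points to tighten. First, you choose $\delta_0$ so that $\overline{B_{\delta_0}(x^*)}\subset\mathrm{int}(C)$ \emph{before} invoking $(\star)$(ii); this tacitly assumes the point $x^*$ is fixed independently of $\delta$, which is the natural reading of the (loosely quantified) assumption, but it is worth saying explicitly. Second, $(\star)$(i) only asserts a density for $(x,y)\in C\times C$, so the Chapman--Kolmogorov step should be written at the level of measures as an inequality, restricting the intermediate integration to $B_r(x^*)\subset C$; this is exactly the bound you actually use, so nothing changes in substance.
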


Now let $V(q)$ be any $C^1(\mathbb{T}^{d})$ potential, $V(q)> -k$ for some positive constant $k$. Consider the function
$$
G(x)=\hat{C}+\frac{B}{2}\|p\|^2+\frac{C}{2}\|r\|^2+D V(q)+H(p,r),
$$
where $B,C,D,H$ and $\hat{C}$ are positive constants to be chosen. We have that
\begin{equation}\label{positivity of G(x)}
G(x)\geq \hat{C}+\frac{B}{2}\|p\|^2+\frac{C}{2}\|r\|^2-\frac{H}{2}\|p\|^2
-\frac{H}{2}\|r\|^2-Dk,
\end{equation}
so we need $B>H$, $C>H$ and $\hat{C}>Dk$. Moreover 
\begin{eqnarray*}
LG(x)&=&D(\nabla_qV,p)-B(\nabla_qV,p)-H(\nabla_qV,r)+B(r,p)+H \|r\|^2\\
& &-C(p,r)-H \|p\|^2-C \|r\|^2-H(p,r)+C\\
& &\leq H \|r\|^2+\frac{H}{4}\|\nabla_qV\|^2-H\|p\|^2-C\|r\|^2+C+H\|r\|^2,
\end{eqnarray*}
where we have chosen $B=D=C+H$. On the other hand, since $V(q)\leq K$,
$$
G(x)\geq -\frac{a}{2}B\|p\|^2-\frac{a}{2}C\|r\|^2-aKB-a\frac{H}{2}\|r\|^2-a\frac{H}{2}\|p\|^2
$$
so imposing also $2H-C\leq-\frac{a}{2}(C+H)$, $-H \leq-\frac{a}{2}(B+H)$ for some $a>0$, the Lyapunov condition is satisfied. One possible choice is $a=1/4$, $B=13/16$, $C=5/8$ and H=3/16.\\
Notice that from what we have just proven it follows that $\forall l\geq 1$ we have 
\begin{equation}\label{Lyapunov l>1}
L G(x)^l\leq-a_lG(x)^l+d_l,
\end{equation}
for some suitable positive constants $a_l$ and $d_l$. In fact,
$$
\partial_{q_i}G(x)^l=lG(x)^{l-1}\partial_{q_i}G(x),
$$
$$
\partial_{p_i}G(x)^l=lG(x)^{l-1}\partial_{p_i}G(x),
$$
and 
$$
\partial_{r_i}^2G(x)=\partial_{r_i}\left[lG(x)^{l-1}\partial_{r_i}G(x)\right]=l(l-1)G(x)^{l-2}(\partial_{r_i}G)^2
+l G(x)^{l-1}\partial^2_{r_i} G(x).
$$
Furthermore, using (\ref{positivity of G(x)}), we obtain
$$
l(l-1)G(x)^{l-2}(\partial_{r_i}G)^2\leq c_l G(x)^{l-1}, 
$$
so that 
$$
LG(x)^l\leq lG(x)^{l-1}LG(x)+c_lG(x)^{l-1}.
$$
Hence, using what we have proven in the case $l=1$, we get 
(\ref{Lyapunov l>1}).

Consider now the case $\bx\in\R^d\times\R^d\times\R^d$. We introduce the Lyapunov function
\begin{eqnarray*}
G(x)&=&\hat{C}+\frac{A}{2}\|q\|^2+\frac{B}{2}\|p\|^2+\frac{C}{2}\|r\|^2+DV(q)\\
& &+E(p,q)+F(q,r)+H(p,r)+M(\nabla_q V,p).
\end{eqnarray*}
Consequently, 
$$
\nabla_qG=Aq+D\nabla_qV+Ep+Fr+M \nabla^2V(q)\cdot p,
$$
$$
\nabla_pG=Bp+Eq+Hr+M \nabla_qV,
$$
$$
\nabla_rG=Cr+Fq+Hp.
$$
Thus, 
\begin{eqnarray*}
LG(x)&=&A(p,q)+D(\nabla_{q}V,p)+E \|p\|^2+F(p,r)-B(\nabla_qV,p)\\
& &-E(\nabla_qV,q)-H(\nabla_qV,r)+B(p,r)+E(q,r)+H\|r\|^2\\
& &-C(p,r)-F(p,q)-H\|p\|^2+M (p,HessV(q)\cdot p)\\
& &-C\|r\|^2-F(r,q)-H(p,r)+C-M\|\nabla_qV\|^2+M (r,\nabla_qV)
\end{eqnarray*}
From Assumption~\ref{assump:potential}(iii) it follows that there exist constants $\tilde{\beta}$ and $\tilde{\sigma}$ such that
$$ 
\tilde{\sigma}\|q\|^2-\tilde{\beta}\|\nabla_qV\|^2\rightarrow +\infty \;\mbox{as }\|q\|^2\rightarrow +\infty.
$$
Hence, it follows that $G$ satisfies the Lyapunov condition. Also in  this case, one can prove that the Lyapunov condition holds for $G(x)^l$, $l\geq 1$, as well.

As for Assumption ($\star$), first of all let us notice that, since the operator $\partial_t+\op$ is  hypoelliptic, the transition probability has a density; because the SDE we are considering has time independent coefficients the density is $C^{\infty}$ provided that $V(q)\in C^{\infty}$~\cite{Nualart2006}. Moreover,  studying the control problem associated with $dx=b(x)dt+\sigma dw$, namely $dX=b(X)dt+\sigma dU$ where $U(t)$ is a smooth control, and using Strook-Varadhan support Theorem, we can prove that $P_t(x,A)>0 \,\forall x\in \R^{3d}, \,t>0$ and for any open $A\in \R^{3d}$. 

Consider now the set $\mathcal{G}_l=\left\{g:\R^{3d}\rightarrow\R, \mbox{ measurable}: \mid g(x)\mid\leq V(x)^l\right\}$. We have proven that the process $x(t)$ has a unique invariant distribution $\rho$. Furthermore,  there exist constants $k=k(l)$ and $\lambda=\lambda(l)$, such that $\forall g\in \mathcal{G}_l$
\begin{equation}\label{geom_ergodicity}
\vert E^{x_0}g(x(t))-\rho(g)\vert \leq kG(x_0)e^{-\lambda t}, 
\quad t\geq 0.
\end{equation}
This completes the proof of the theorem.

\def\cprime{$'$} \def\cprime{$'$} \def\cprime{$'$} \def\cprime{$'$}
  \def\cprime{$'$} \def\cprime{$'$} \def\cprime{$'$}
  \def\Rom#1{\uppercase\expandafter{\romannumeral #1}}\def\u#1{{\accent"15
  #1}}\def\Rom#1{\uppercase\expandafter{\romannumeral #1}}\def\u#1{{\accent"15
  #1}}\def\cprime{$'$} \def\cprime{$'$} \def\cprime{$'$} \def\cprime{$'$}
  \def\cprime{$'$} \def\cprime{$'$} \def\cprime{$'$}
  \def\polhk#1{\setbox0=\hbox{#1}{\ooalign{\hidewidth
  \lower1.5ex\hbox{`}\hidewidth\crcr\unhbox0}}}


\end{document}